\newcommand{\fixqed}{\makeatletter\let\qed\relax\makeatother}
\title{Homogeneous Equations of Algebraic Petri Nets\footnote{This work was partially funded by the DFG Graduiertenkolleg 1651 (SOAMED).}}
\titlerunning{Homogeneous Equations of Algebraic Petri Nets}
\author[1]{Marvin Triebel}
\author[2]{Jan Sürmeli}
\affil[1]{Humboldt-Universität zu Berlin\\
  \texttt{triebel@hu-berlin.de}}
\affil[2]{Humboldt-Universität zu Berlin\\
  \texttt{suermeli@hu-berlin.de}}
  \authorrunning{M. Triebel and J. Sürmeli} 
\subjclass{F.3.1 Specifying and Verifying and Reasoning about Programs, D.3.1 Formal Definitions and Theory}
\keywords{Algebraic Petri Nets, Invariants, Linear Equations, Validity, Stability}
\newcommand{\ffrom}{:}
\newcommand{\fto}{\to}
\newcommand{\nats}{\mathbb{N}}
\newcommand{\ints}{\mathbb{Z}}
\newcommand{\absval}[1]{|#1|}
\newcommand{\group}{\mathbb{G}}
\newcommand{\gplus}{\oplus}
\newcommand{\gminus}{\ominus}
\newcommand{\gzero}{0_{\group}}
\newcommand{\groupstruc}{(\group,\gplus)}
\newcommand{\termMult}{\odot}
\newcommand{\polys}[2]{#1\langle#2\rangle}
\newcommand{\ml}[2]{(#1,#2)}
\newcommand{\kvector}{\vec{k}}
\newcommand{\terms}{\Theta}
\newcommand{\gterms}{\terms_\emptyset}
\newcommand{\ar}[1]{{\mathrm{a}_{#1}}}
\newcommand{\indices}{\mathbb{I}}
\newcommand{\support}[1]{{\mathrm{supp}({#1})}}
\newcommand{\varset}{{\mathrm{VAR}}}
\newcommand{\avarset}{V}
\newcommand{\subst}[2][\sigma]{{#2}{#1}}
\newcommand{\val}[2][\sigma]{\llbracket{#2}\rrbracket_{#1}}
\newcommand{\vectors}[2][P]{\polys{#2}{\terms}^{#1}}
\newcommand{\gvectors}[2][P]{\polys{#2}{\gterms}^{#1}}
\newcommand{\apn}{APN\xspace}
\newcommand{\apns}{APNs\xspace}
\newcommand{\apnstruc}{APNS\xspace}
\newcommand{\signature}{\Sigma}
\newcommand{\psym}{P}
\newcommand{\marking}{\vec{m}}
\newcommand{\markings}{\gvectors{\ints}_{\geq 0}}
\newcommand{\tdelta}{\vec{t}^\Delta}
\newcommand{\tdeltap}{\vec{t^\prime}^\Delta}
\newcommand{\tplus}{\vec{t}^+}
\newcommand{\tminus}{\vec{t}^-}
\newcommand{\tminusp}{\vec{t^\prime}^-}
\newcommand{\step}[4]{{#1}\mathbin{\left[#2#3\right\rangle}{#4}}
\newcommand{\runldots}[8]{\step{#1}{#2}{#3}{#4}\ldots\step{#5}{#6}{#7}{#8}}
\newcommand{\bsptext}[1]{\textsf{#1}} 
\newcommand{\kterm}{\kappa}
\newcommand{\kcoeff}{\gamma}
\newcommand{\spset}{\mathcal{S}}
\newcommand{\zero}{zero\xspace}
\newcommand{\zeros}{zeros\xspace}
\newcommand{\Zero}{Zero\xspace}
\newcommand{\pre}[1]{\operatorname{pre}(#1)}
\newcommand{\lref}[1]{Lemma~\ref{#1}\xspace}
\newcommand{\dref}[1]{Definition~\ref{#1}\xspace}
\newcommand{\tref}[1]{Theorem~\ref{#1}\xspace}
\newcommand{\fref}[1]{Figure~\ref{#1}\xspace}
\newcommand{\Fref}[1]{Figure~\ref{#1}\xspace}
\newcommand{\sref}[1]{Section~\ref{#1}\xspace}
\begin{document}


\renewcommand{\footnote}[1]{\textcolor{red}{\oldfn{\textcolor{red}{#1}}}}
\maketitle

\begin{abstract}
\emph{Algebraic Petri nets} are a formalism for modeling distributed systems and algorithms, describing control and data flow by combining Petri nets and algebraic specification. 
One way to specify correctness of an algebraic Petri net model $N$ is to specify a \emph{linear equation} $E$ over the places of $N$ based on term substitution, and coefficients from an abelian group $\group$. 
Then, $E$ is \emph{valid} in $N$ iff $E$ is valid in each reachable marking of $N$. 
Due to the expressive power of Algebraic Petri nets, validity is generally undecidable. 
\emph{Stable} linear equations form a class of linear equations for which validity is decidable. \emph{Place invariants} yield a well-understood but incomplete characterization of all stable linear equations. In this paper, we provide a complete characterization of stability for the subclass of \emph{homogeneous} linear equations, by restricting ourselves to the interpretation of terms over the Herbrand structure without considering further equality axioms. 
Based thereon, we show that stability is decidable for homogeneous linear equations if $\group$ is a cyclic group.
\end{abstract}

\section{Introduction}

The formalism of \emph{algebraic Petri nets} (\apns) permits to formally model both control flow and data flow of distributed systems and algorithms, 
extending Petri nets with concepts from algebraic specification, namely a signature together with equality axioms. Thus, \apns combine the benefits of Petri nets, such as explicit modeling of concurrency and options for structural analysis, with the ability to describe data objects on a freely chosen level of abstraction. The price to pay for this expressive power is that many important behavioral properties, such as reachability of a certain marking, are undecidable. However, there are behavioral properties that can be proven based on \emph{structural} properties, such as invariants. 

In this paper, we study a particular class of behavioral properties, namely \emph{linear equations}. Intuitively, a linear equation $E$ formalizes a linear correlation between the tokens on different places, requiring that each reachable marking satisfies $E$. More formally, an \apn $N$ is defined over a signature $\Sigma$, and the tokens are ground terms over $\Sigma$. A linear equation $E$ has the form $\sum_{p\in P} \kcoeff_p \kterm_p = b_1\mu_1 + \ldots + b_n\mu_n$, where $P$ is the set of places, each $\kcoeff_p$ and $b_i$ are coefficients stemming from an abelian group, each $\kterm_p$ is a term over $\Sigma$, and each $\mu_i$ is a ground term over $\Sigma$. A marking satisfies $E$ if substituting each variable in each $\kterm_p$ with the tokens on $p$ yields an equality. \emph{Validity} of $E$ in $N$ requires that each reachable marking of $N$ satisfies $E$. Case studies have shown that this class of properties permits to formalize important behavioral properties of distributed systems and algorithms. Unfortunately, verifying the validity of $E$ in $N$ is generally infeasible. However, if $E$ is \emph{stable} then validity of $E$ becomes decidable. Stability requires the preservation of $E$ along all---not necessarily reachable---steps, that is, if a marking satisfies $E$, then firing a transition yields a marking satisfying $E$. Now, if $E$ is stable, validity of $E$ coincides with the initial marking satisfying $E$. 

\emph{Place invariants} yield a subclass of stable linear equations. Intuitively, a place invariant is a solution of a homogeneous system of linear equations given by the structure of $N$, providing the coefficients $\kcoeff_p$ and terms $\kterm_p$---the right hand side can be chosen arbitrarily. This characterization is known to be decidable but incomplete, that is, there are stable linear equations, such that the left hand side is not given by a place invariant. A decidable, complete characterization of stability---or an undecidability proof---is still an open problem. 

In this paper, we contribute to this field of study as follows: 
\begin{enumerate}
 \item We show the undecidability of validity of homogeneous equations.
 \item We provide a complete characterization of stability, restricting ourselves to
 \begin{itemize}
  \item \emph{homogeneous} linear equations, that is, $n = 1$ and $b_1 = 0$, and \item the interpretation of terms in the Herbrand structure, that is, assuming coincidence of syntax and semantics of a term, without considering further equality axioms for terms. 
 \end{itemize}
 \item We show that our characterization is decidable if the coefficients stem from a cyclic group. 
\end{enumerate}

\noindent\sref{sec:formalization} recalls required notions for equations of algebraic Petri nets. We summarize our main theorems in \sref{sec:results}, and prove these theorems in \sref{sec:undecidability} and \sref{sec:decide-stable}. We discuss related work in \sref{sec:related}, and conclude in \sref{sec:conclusion}.
Missing proofs can be found in \sref{sec:proofs} of the appendix.

\section{Formalization}
\label{sec:formalization}

\subsection{Preliminaries}

We write $\ints$ for the set of all integers, and $\nats$ denotes the set $\{0,1,2,\ldots\}$ of natural numbers including 0. Let $z\in\ints$. Then, $\absval{z}$ denote the absolute value. 

\subsubsection{Polynomials over Abelian Groups}

\emph{Polynomials over abelian groups} serve as a common algebraic base to formalize \apns and linear equations of \apns.

\begin{definition}[Abelian Group, Scalar Product]
An \emph{abelian group} $\groupstruc$ consists of a set $\group$, and an associative, commutative, binary operation $\gplus$ on $\group$ with an identity $\gzero$, and inverses $\gminus g$ for each $g\in \group$. Let $z\in\ints$ and $a\in\group$. We define the \emph{scalar product} $za\in\group$ by
\begin{equation*}
 za := \begin{cases}
       \displaystyle\bigoplus_{i=0}^z a & \mbox{if } z \geq 0\\
       \gminus (-za) & \mbox{otherwise}.
      \end{cases}
\end{equation*}
$(\group,\oplus)$ is \emph{cyclic} iff there exists $a\in\group$, such that $\group = \{za\mid z\in\ints\}$.
\end{definition}

\noindent Whenever clear from context, we simply write $\group$ for $\groupstruc$. Examples for abelian groups are the real numbers, rational numbers, integers, and the additive group $\ints/n\ints$ of integers modulo some $n\in\nats$. The group $\ints$ is infinite and cyclic, the group $\ints/n\ints$ is finite and cyclic. In contrast to that, the group of rational numbers is not cyclic.


\begin{definition}[Series, Polynomial, Monomial, Empty Polynomial]
 Let $M$ be a set, $\group$ be an abelian group, and $f \ffrom M \fto \group$ be a function. Then, $f$ is a (linear) \emph{series} over $M$ and $\group$ with \emph{support} $\support{f} := \{ m \in M \,\mid\, f(m) \not = \gzero \}$. If $\support{f}$ is finite, then $f$ is a \emph{polynomial}. We write $\polys{\group}{M}$ for the set of all polynomials over $M$ and $\group$. If $\support{f}$ is singleton, $f$ is a \emph{monomial}, and we denote $f$ by $\ml{m}{a}$ where $\support{f} = \{m\}$ and $f(m) = a$. If $\support{f} = \emptyset$, then $f$ is \emph{empty}, and we denote $f$ by $\gzero$.

\end{definition}
We lift $\gplus$ and the scalar product to $\polys{\group}{M}$ by pointwise application: 
\begin{definition}[Addition of Polynomials]
Let $M$ be a set and $\group$ be an abelian group. For $p_1,p_2\in\polys{\group}{M}$, $m\in M$, and $z\in\ints$, we define the polynomials $p_1\gplus p_2$ and $z p_1$ over $M$ and $\group$ by
\begin{align*}
 (p_1\gplus p_2)(m) &:= p_1(m)\gplus p_2(m)\enspace,\\
 (z p_1)(m) &:= zp_1(m)\enspace.
\end{align*}
\end{definition}
We lift associative binary operations from $M$ to $\polys{\group}{M}\times\polys{\ints}{M}$ by applying the Cauchy product:
\begin{definition}[Cauchy Product]
 Let $\termMult$ be an associative binary operation on a set $M$, $\group$ be an abelian group, $p_1\in\polys{\group}{M}$, and $p_2\in \polys{\ints}{M}$. We define the series $p_1\termMult p_2$ over $M$ and $G$ by 
\begin{equation*}
(p_1\termMult p_2)(m):=\bigoplus_{m=m_1\termMult m_2}\underbrace{p_2(m_2)}_{\in\ints}\underbrace{p_1(m_1)}_{\in\group}.  
\end{equation*}
\end{definition}
Because $p_1$ and $p_2$ are polynomials, the set $\support{p_1\termMult p_2} = \{m_1\termMult m_2\mid m_1,m_2\in\group,p_1(m_1)\not=\gzero,p_2(m_2)\not=0\}$ is finite, and thus $p_1\termMult p_2$ is again a polynomial over $M$ and $\group$.

\subsubsection{Terms}

For this paper, we fix a set of variables $\varset$, a non-empty, finite index set $\indices$, and a signature $\signature = (\dot{f}_i/\ar{i})_{i\in\indices}$ consisting of $|\indices|$ distinct function symbols $\dot{f}_i$ with respective arity $\ar{i}$. 

\begin{definition}[Term]
For a set $\avarset\subseteq\varset$, the set $\terms_{\avarset}$ of \emph{terms} over variables $\avarset$ is the smallest set satisfying the following conditions: 
\begin{enumerate}
 \item $\avarset \subset \terms_{\avarset}$. 
 \item Let $i\in\indices$, and $\theta_1,\ldots,\theta_{\ar{i}}\in\terms_{\avarset}$. Then, $\dot{f}_i(\theta_1,\ldots,\theta_{\ar{i}})\in\terms_{\avarset}$. 
\end{enumerate}
The elements of $\gterms$ are called \emph{ground terms}. 
\end{definition}
As usual, if $\ar{i} = 0$, we abbreviate $\dot{f}_i()$ as $\dot{f}_i$. We abbreviate the set $\terms_\varset$ of all terms as $\terms$. 

A \emph{substitution} maps each variable to a term. A substitution is an \emph{assignment} if it maps each variable to a ground term. 

\begin{definition}[Substitution, Assignment]
Every function $\sigma \ffrom \varset \to \terms$ is a substitution. Let $\theta\in\terms$. The term $\subst[\sigma]{\theta}$ is defined by: 
\begin{equation*}
 \subst[\sigma]{\theta} := \begin{cases}
                   \sigma(\theta) & \mbox{if } \theta\in\varset\\
                   \dot{f}_i(\subst[\sigma]{\theta_1},\ldots,\subst[\sigma]{\theta_{\ar{i}}}) & \mbox{if } \theta = \dot{f}_i(\theta_1,\ldots,\theta_{\ar{i}}), i\in\indices.
                  \end{cases}
\end{equation*}
If $\sigma(x)\in\gterms$ for each $x\in\varset$, then $\sigma$ is an \emph{assignment}, and we also write $\val[\sigma]{\theta}$ instead of $\subst[\sigma]{\theta}$.
\end{definition}
Obviously, if $\sigma$ is an assignment, then $\val{\theta}\in\gterms$ for all $\theta\in\terms$. 

\emph{Unification} is the problem of applying a substitution to terms, such that the resulting terms become identical. 
\begin{definition}[Unification problem, unifier, solvable]
A \emph{unification problem} $U$ is a finite subset $\{(\theta_1,\theta_1^\prime),\ldots,(\theta_n,\theta_n^\prime)\}$ of $\terms\times\terms$, also denoted by $\{\theta_1\doteq\theta_1^\prime,\ldots,\theta_n\doteq\theta_n^\prime\}$. A substitution $\sigma$ is a \emph{unifier} for $U$ iff for all $1\leq i\leq n$: $\subst{\theta_i} = \subst{\theta_i^\prime}$. If there exists a unifier for $U$, then $U$ is \emph{solvable}.
\end{definition}
It is known that every solvable unification problem has a \emph{most general unifier} (up to variants) that subsumes all other unifiers: 
\begin{lemma}
 Let $U$ be a solvable unification problem. Then, there exists a unifier $\hat{\sigma}$ for $U$, such that: For each unifier $\sigma$ for $U$, there exists a substitution $\sigma^\prime$ with $\sigma(x) = \subst[\sigma^\prime]{\hat{\sigma}(x)}$ for all $x\in\varset$.
\end{lemma}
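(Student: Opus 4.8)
The plan is to prove this classical most-general-unifier result by a terminating, unifier-preserving transformation on unification problems, in the style of Martelli and Montanari. First I would single out the \emph{solved forms}: a problem $\{x_1\doteq\theta_1,\ldots,x_n\doteq\theta_n\}$ in which $x_1,\ldots,x_n$ are pairwise distinct variables, each occurring in the problem only as its own left-hand side (in particular in none of the $\theta_j$). From such a solved form one reads off the substitution $\hat{\sigma}$ with $\hat{\sigma}(x_i)=\theta_i$ and $\hat{\sigma}(y)=y$ for every other variable $y$; this $\hat{\sigma}$ is idempotent and plainly a unifier of the solved problem.

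Next I would introduce rewrite rules on problems: \emph{delete}, removing any equation $\theta\doteq\theta$; \emph{decompose}, replacing $\dot{f}_i(\theta_1,\ldots,\theta_{\ar{i}})\doteq\dot{f}_i(\theta_1^\prime,\ldots,\theta_{\ar{i}}^\prime)$ by the equations $\theta_1\doteq\theta_1^\prime,\ldots,\theta_{\ar{i}}\doteq\theta_{\ar{i}}^\prime$; \emph{orient}, turning $\theta\doteq x$ into $x\doteq\theta$ whenever $\theta\notin\varset$; and \emph{eliminate}, which given an equation $x\doteq\theta$ with $x$ a variable not occurring in $\theta$ but occurring elsewhere in the problem, applies the substitution $[x\mapsto\theta]$ to all other equations. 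These are accompanied by two failure conditions: a \emph{clash} $\dot{f}_i(\ldots)\doteq\dot{f}_j(\ldots)$ with $i\neq j$, and an \emph{occurs-check} failure $x\doteq\theta$ with $x$ a variable that is a proper subterm of $\theta$. The key invariant, proved by a routine case analysis on the rule applied, is that every non-failing rule leaves the set of unifiers unchanged, and that a failure condition is reached only from a problem that has no unifier at all; hence if $U$ is solvable, no failure condition can ever fire.

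The main obstacle is termination. I would equip each problem with the lexicographically ordered triple $(n_1,n_2,n_3)$, where $n_1$ counts the \emph{unsolved} variables (those occurring in the problem but not yet isolated as the unique left-hand side of an equation), $n_2$ is the total number of symbols in all terms of the problem, and $n_3$ is the number of equations whose left-hand side is not a variable. One verifies that \emph{eliminate} strictly decreases $n_1$---crucially, a variable occurring in $\theta$ cannot have been solved, so no previously solved variable becomes unsolved---while \emph{decompose} and \emph{delete} decrease $n_2$ without increasing $n_1$, and \emph{orient} decreases $n_3$ without affecting $n_1$ or $n_2$. Since the lexicographic order on triples of nonnegative integers is well-founded, every rewrite sequence terminates.

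Finally I would combine the pieces. Applying rules exhaustively to $U$ terminates; since $U$ is solvable, no failure fires, so the process halts at a solved form $U^\ast$ with associated $\hat{\sigma}$. Unifier preservation gives that $\hat{\sigma}$ unifies $U^\ast$ and hence $U$, and that every unifier $\sigma$ of $U$ also unifies $U^\ast$. For the subsumption claim I would take $\sigma^\prime:=\sigma$: for a variable $y$ outside $\{x_1,\ldots,x_n\}$ we have $\hat{\sigma}(y)=y$, so $\subst[\sigma]{\hat{\sigma}(y)}=\sigma(y)$; and for $y=x_i$ we have $\subst[\sigma]{\hat{\sigma}(x_i)}=\subst[\sigma]{\theta_i}=\sigma(x_i)$, the last step holding precisely because $\sigma$ unifies the equation $x_i\doteq\theta_i$ of $U^\ast$. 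Thus $\sigma(x)=\subst[\sigma]{\hat{\sigma}(x)}$ for all $x\in\varset$, so $\hat{\sigma}$ is the desired most general unifier.
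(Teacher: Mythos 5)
Your proposal is correct, but there is nothing in the paper to compare it against: the paper states this lemma as a known classical fact (Robinson's unification theorem) and gives no proof of it, neither in the body nor in the appendix. Your Martelli--Montanari-style argument is the standard self-contained way to establish it, and the details check out. The rules delete, decompose, orient and eliminate preserve the set of unifiers; a clash or occurs-check failure can only occur in an unsolvable problem, so solvability of $U$ together with unifier preservation rules failures out along any derivation; and the lexicographic measure $(n_1,n_2,n_3)$ terminates for exactly the reason you isolate --- a variable occurring in $\theta$ cannot already be solved, so eliminate strictly decreases $n_1$ while the other rules never increase it and strictly decrease a later component. The closing step, taking $\sigma'
:= \sigma$ and using that $\hat{\sigma}$ is the identity off the solved variables and that $\sigma$ unifies each equation $x_i \doteq \theta_i$ of the solved form, is the standard idempotent-MGU argument and yields the subsumption property exactly as stated in the lemma. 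The one step you leave implicit is the claim that a problem to which no rule applies and in which no failure condition holds must be in solved form: every left-hand side must then be a variable (else orient, decompose, or clash applies), that variable cannot occur in its right-hand side (else delete or the occurs check applies), and it cannot occur elsewhere in the problem (else eliminate applies). This verification is routine, but it is what licenses the assertion that the terminal problem $U^{\ast}$ is solved and hence determines $\hat{\sigma}$; you should spell it out if this sketch were expanded into a full proof.
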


We define a \emph{product} on terms by means of term substitution: The product of $\varrho$ and $\theta$ is defined by substituting every occurrence of any variable in $\varrho$ by $\theta$.  

\begin{definition}[Term Product]
Let $\varrho ,\theta\in \terms$ be terms, and $\sigma$ be the substitution with $\sigma(x) = \theta$ for all $x\in\varset$. Then, $\varrho\termMult\theta := \subst{\varrho}$ is the \emph{product} of $\varrho$ and $\theta$. 
\end{definition}
We observe that $\termMult$ is associative. If $\varrho\in\gterms$, then $\varrho\termMult\theta = \varrho$.

We lift substitutions from terms to polynomials over terms and abelian groups by pointwise substitution and subsequent ``simplification'' of the polynomial: 
\begin{definition}[Substitutions in Polynomials over Terms]
Let $\group$ be an abelian group, and $p\in\polys{\group}{\terms}$. Let $\sigma$ be a substitution. We define $\subst{p}\in\polys{\group}{\terms}$ by 
\begin{equation*}
\subst{p}(\theta) := \bigoplus_{\theta=\val{\theta^\prime}}p(\theta'). 
\end{equation*}
If $\sigma$ is an assignment, we also write $\val{p}$ instead of $\subst{p}$.
\end{definition}
We observe $\subst{(\varrho\termMult\theta)} = \varrho\termMult\subst{\theta}$ for all $\varrho,\theta\in\terms$, and $\subst{(p_1\termMult p_2)} = p_1\termMult\subst{p_2}$ for all $p_1,p_2\in\polys{\group}{\terms}$. Moreover, if $\sigma$ is an assignment then $\support{\val{p}}\subseteq\gterms$. 

\subsubsection{Vectors}

In this paper, a \emph{$P$-vector} is a mapping from a set $P$ into polynomials over terms and an abelian group. 
\begin{definition}[$P$-vector]
Let $P$ be a set, $\groupstruc$ be an abelian group, and $\kvector\ffrom P\fto\polys{\group}{\terms}$. Then, $\kvector$ is a \emph{$P$-vector} over $\group$. We write $\vectors{\group}$ for the set of all $P$-vectors over $\group$. If $\kvector(p)$ is a monomial for each $p\in P$, then $\kvector$ is \emph{simple}. If $\group = \ints$, and $\kvector\geq 0$ ($\kvector\leq 0$), then $\kvector$ is \emph{semi-positive} (\emph{semi-negative}). 
\end{definition}

In order to simplify notation, we lift the basis notions from polynomials to $P$-vectors: 
\begin{definition}[$P$-vectors: Support, emptiness, addition, Cauchy product, and assignments]
Let $P$ be a set, $\groupstruc$ be an abelian group, $\kvector,\kvector_1,\kvector_2\in\vectors{\group}$, and $\kvector^\prime\in\vectors{\ints}$. 
\begin{itemize}
 \item $\support{\kvector} := \bigcup_{p\in P} \support{\kvector(p)}$ is the \emph{support} of $\kvector$. 
 \item If $\kvector(p) = \gzero$ for all $p\in P$, then $\kvector$ is the empty $P$-vector, also denoted by $\gzero$.  
 \item We define $(\kvector_1\gplus\kvector_2)(p) := \kvector_1(p)\gplus\kvector_2(p)$ for all $p\in P$, 
 \item We extend $\termMult$ from $\polys{\group}{\terms}\times\polys{\ints}{\terms}\fto\polys{\group}{\terms}$ to $\vectors{\group}\times\vectors{\ints}\fto\polys{\group}{\terms}$ by defining $(\kvector\termMult\kvector^\prime)(\theta) := \bigoplus_{p\in P} \kvector(p)\termMult\kvector^\prime(p)$ for all $\theta\in\terms$, 
 \item If $\sigma$ is an assignment, we define $\val{\kvector}\in\vectors{\group}$ by $\val{\kvector}(p) := \val{\kvector(p)}$ for all $p\in P$. 
\end{itemize}
\end{definition}

\noindent Let $\kvector_1\in\polys{\group}{\terms}^P$, $\kvector_2\in\polys{\ints}{\terms}^P$ and
$\delta$ be a substitution.
We observe: $\kvector_1\termMult(\subst[\delta]{\kvector_2}) =
\sum_{p\in P}\kvector_1(p)\termMult(\subst[\delta]{\kvector_2})(p) =
\sum_{p\in P}\kvector_1(p)\termMult(\subst[\delta]{\kvector_2(p)}) =
\sum_{p\in P}\subst[\delta]{(\kvector_1(p)\termMult\kvector_2(p))} = 
\subst[\delta]{(\kvector_1\termMult\kvector_2)}$.

\subsubsection{Algebraic Petri Nets}

An \emph{algebraic Petri net structure} consists of \emph{places} $P$ and \emph{transitions} $T$. A place $p\in P$ describes a token store, and a transition $t$ is given by two semi-positive $P$-vectors $\tminus$ and $\tplus$, describing token consumption and production, respectively.

\begin{definition}[Transition, algebraic Petri net structure]
Let $P\not=\emptyset$ be a set. A \emph{transition} $t = (\tminus,\tplus)$ over $P$ consists of two semi-positive simple $P$-vectors $\tminus,\tplus$ over $\ints$. We define the \emph{effect} $\tdelta\in\vectors{\ints}$ of $t$ by $\tdelta := -\tminus+\tplus$. Let $T$ be a set of transitions over $P$. Then, $(P,T)$ is an \emph{algebraic Petri net structure} (\apnstruc). We write $\pre{t}$ for $\{ p\in P \mid \tminus(p) > 0\}$.
\end{definition}

\begin{figure}[t]
\begin{subfigure}{0.42\textwidth}

\definecolor{C000000}{HTML}{000000}
\definecolor{CFFFFFF}{HTML}{FFFFFF}

\begin{tikzpicture}[inner sep=0pt, outer sep=0pt, node distance=0.5cm]
\node (n0) [minimum height=15.0pt,minimum width=15.0pt,at={(544.845431466667pt,-260.09051904pt)}, draw, line width=1.0pt, shape=circle] {};\node[above of=n0]{\bsptext{B}};
\node (n1) [minimum height=15.0pt,minimum width=15.0pt,at={(593.202279253333pt,-260.09051904pt)}, draw, line width=1.0pt, shape=circle] {};\node[above of=n1]{\bsptext{C}};
\node (n2) [minimum height=15.0pt,minimum width=15.0pt,at={(563.54851584pt,-328.15919104pt)}, fill=CFFFFFF, draw, line width=1.0pt, shape=rectangle] {};\node[left of=n2]  {\bsptext{t}};
\node (n3) [minimum height=15.0pt,minimum width=15.0pt,at={(505.046368426667pt,-260.09051904pt)}, draw, line width=1.0pt, shape=circle] {};\node[above of=n3]{\bsptext{A}};
\node (n4) [minimum height=15.0pt,minimum width=15.0pt,at={(628.45192704pt,-260.09051904pt)}, draw, line width=1.0pt, shape=circle] {};\node[above of=n4]{\bsptext{D}};
\node (n5) [minimum height=12.719467520000023pt,minimum width=23.289113600000064pt,at={(524.190925226667pt,-309.65450624pt)}, line width=1.0pt, shape=rectangle] {\bsptext{$\dot{g}(W)$}};
\node (n6) [minimum height=12.719467520000023pt,minimum width=23.289113600000064pt,at={(565.56958592pt,-271.9934192pt)}, line width=1.0pt, shape=rectangle] {\bsptext{$\dot{f}(Y)$}};
\node (n7) [minimum height=12.719467520000023pt,minimum width=23.289113600000064pt,at={(589.90727552pt,-286.11149632pt)}, line width=1.0pt, shape=rectangle] {\bsptext{$W$}};
\node (n8) [minimum height=12.719467520000023pt,minimum width=23.289113600000064pt,at={(605.54742144pt,-305.20406784pt)}, line width=1.0pt, shape=rectangle] {\bsptext{$2Z$}};
\node (n9) [minimum height=15.0pt,minimum width=15.0pt,at={(628.45192704pt,-328.15919104pt)}, draw, line width=1.0pt, shape=circle] {};\node[above of=n9]{\bsptext{E}};
\node (n10) [minimum height=12.719467520000023pt,minimum width=23.289113600000064pt,at={(602.55382144pt,-337.16612544pt)}, line width=1.0pt, shape=rectangle] {\bsptext{$\dot{f}(W)$}};

\path [-latex',line width=1.0pt,draw,](n0)--(n2);

\path [-latex',line width=1.0pt,draw,](n3)--(n2);

\path [-latex',line width=1.0pt,draw,](n4)--(n2);

\path [-latex',line width=1.0pt,draw,](n1)--(n2);

\path [-latex',line width=1.0pt,draw,](n2)--(n9);
\end{tikzpicture}

\caption{An \apnstruc $S_1$}
\end{subfigure}
\begin{subfigure}{0.57\textwidth}
\begin{subfigure}{\textwidth}
$\Sigma = \{ \dot{f}/1, \dot{g}/1, \dot{c}/0 \}$
\caption{The Signature of $S_1$}
\end{subfigure}
\begin{subfigure}{\textwidth}
\vspace{0.9cm}
%
\begin{tabularx}{\textwidth}{|X|c|c|c|c|c|c|}
\hline
& \bsptext{A} & \bsptext{B} & \bsptext{C} & \bsptext{D} & \bsptext{E} & $\group$ \\
\hline

$E_1$& $4\dot{f}(\bsptext{A})$ & $3\dot{g}(\bsptext{B})$ &  $- 5\dot{f}(\dot{g}(\bsptext{C}))$ &  $-\bsptext{D}$ & $0$ & $\ints$ \\
\hline
$E_2$& $3\dot{c}$ & $0$ & $0$ & $2\bsptext{D}$ & $0$ & $\ints/7\ints$ \\
\hline
\end{tabularx}
\caption{Homogeneous equations over $\ints$ and $\ints/7\ints$}
\end{subfigure}
\end{subfigure}
\caption{An \apnstruc $S_1$ with equations $E_1$ and $E_2$}
\label{fig:net-equations}
\end{figure}

\noindent \Fref{fig:net-equations} shows an example of an \apnstruc $S_1$ with transition $\bsptext{t}$, places 
\bsptext{A}, \bsptext{B}, \bsptext{C}, \bsptext{D} and \bsptext{E} and signature 
$\Sigma$ using two unary function symbols $\dot{f}$ and $\dot{g}$ and the 
constant $\dot{c}$. Transition $\bsptext{t}$ consists of
$\bsptext{t}^- = (\dot{g}(W) \quad \dot{f}(Y) \quad W \quad 2Z \quad 0 )$ and
$\bsptext{t}^+ = (0 \quad 0 \quad 0 \quad 0 \quad \dot{f}(W) )$.

A \emph{token} is a ground term, a \emph{marking} maps each place to a multiset 
of tokens: 
\begin{definition}[Marking]
Let $(P,T)$ be an \apnstruc. Let $\marking\in\vectors{\ints}$ be a semi-positive $P$-vector over $\ints$ with $\support{\marking}\subseteq\gterms$. Then, $\marking$ is a \emph{marking} of $(P,T)$. We write $\markings$ for the set of all markings of $(P,T)$.
\end{definition}

\noindent\emph{Algebraic Petri net semantics} are defined by the notion of a \emph{step} based on the effect of a transition, and the notion of a \emph{firing mode}: 
\begin{definition}[Step]
Let $(P,T)$ be an \apnstruc, $\marking,\marking^\prime\in\markings$, $t\in T$, and $\sigma$ be an assignment, such that $\marking\geq\val{\tminus}$ and $\marking^\prime = \marking + \val{\tdelta}$. Then, $\marking$ \emph{enables} transition $t$ in \emph{firing mode} $\sigma$, denoted by $\step{\marking}{t}{\sigma}{}$, and $(\marking,t,\sigma,\marking^\prime)$ is a step of $(P,T)$, denoted by $\step{\marking}{t}{\sigma}{\marking^\prime}$. 
\end{definition}
\noindent We remark that our definition of \emph{enabling} does not consider additional equality axioms; permitting such axioms is left for future work.  

An \emph{algebraic Petri net} \apn is an \apnstruc together with an \emph{initial marking}. Subsequent steps from the initial marking are \emph{runs}, the resulting markings are \emph{reachable}:  

\begin{definition}[Algebraic Petri net, run, reachable]
Let $(P,T)$ be an \apnstruc, and $\marking_0\in\markings$. Then, $(P,T,\marking_0)$ is an \emph{algebraic Petri net} (\apn). Let $\runldots{\marking_0}{t_1}{\sigma_1}{\marking_1}{\marking_{n-1}}{t_n}{\sigma_n}{\marking_n}$ be a sequence of steps. Then, $(t_1,\sigma_1)\ldots(t_n,\sigma_n)$ is a \emph{run} of $(P,T,\marking_0)$ and $\marking_n$ is \emph{reachable} in $(P,T,\marking_0)$.
\end{definition}

\subsection{Homogeneous Linear Equations of \apns}

A homogeneous \emph{(linear) $P$-equation} over a set $P$ of places has the form $\sum_{p\in P} \kcoeff_p\kterm_p = \gzero$, where $\kcoeff_p\in\group$ ($p\in P$) are elements of an abelian group $\group$ with $\gzero$ as neutral element and each $\kterm_p$ ($p\in P$) is a term. Formally, a homogeneous $P$-equation is given by a simple $P$-vector.

\begin{definition}[Homogeneous $P$-equation]
 Let $P$ be a set, $\group$ be an abelian group and $\kvector\in\vectors{\group}$ be simple.
 Then, $\kvector$ induces a \emph{homogeneous $P$-equation} over $\group$.
\end{definition}
\noindent\Fref{fig:net-equations} shows two equations $E_1$ and $E_2$. $E_1$ is over the group of integer $\ints$ and $E_2$ is over the group of integers modulo $7$, $\ints/7\ints$.
The table shows the simple $P$-vectors. For instance, $\kvector_1(\bsptext{A})\termMult\mathcal{X_{\bsptext{A}}}$ is the monomial
$\ml{\dot{f}(\bsptext{A})}{4}$.

A marking $\marking$ satisfies $E$ if replacing $\psym$ by $\marking$ yields an identity. A homogeneous $P$-equation is \emph{valid} in an \apn if it is satisfied by each reachable marking.

\begin{definition}[Satisfaction, validity]
Let $(P,T)$ be an \apnstruc, $\marking$ be a marking, $\group$ be an abelian group, and $E$ be a homogeneous $P$-equation over $\group$ given by the simple $P$-vector $\kvector\in\vectors{\group}$. If $\kvector\termMult\marking = \gzero$, then $\marking$ \emph{satisfies} $E$. If each reachable marking of $(P,T,\marking)$ satisfies $E$, then $E$ is \emph{valid} in $(P,T,\marking)$. 
\end{definition}
\noindent A homogeneous $P$-equation is \emph{stable} if satisfaction is preserved by all steps:

\begin{definition}[Stability]
Let $(P,T)$ be an \apnstruc, $t\in T$, $\group$ be an abelian group, and $E$ be a homogeneous $P$-equation over $\group$. Then, $E$ is \emph{$t$-stable} in $(P,T)$ iff for each step $\step{\marking}{t}{\sigma}{\marking^\prime}$ of $(P,T)$: If $\marking$ satisfies $E$, then $\marking^\prime$ satisfies $E$. 
\end{definition}

\noindent Stability together with satisfaction in the initial marking yields validity: 
\begin{lemma}
\label{lem:inductive-invariants}
 Let $(P,T,\marking)$ be an \apn, $\group$ be an abelian group, and $E$ be a homogeneous $P$-equation over $\group$ given by a simple $P$-vector $\kvector\in\vectors{\group}$. If $E$ is $t$-stable for each $t\in T$, and $\marking$ satisfies $E$, then $E$ is valid in $(P,T,\marking)$. 
\end{lemma}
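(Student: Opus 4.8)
The plan is to prove the statement by induction on the length of a run, exploiting that by definition a marking is reachable in $(P,T,\marking)$ exactly when it is the final marking $\marking_n$ of some run $\runldots{\marking_0}{t_1}{\sigma_1}{\marking_1}{\marking_{n-1}}{t_n}{\sigma_n}{\marking_n}$ starting from $\marking_0 = \marking$. Thus it suffices to show, for every $n\in\nats$, that every marking reachable by a run of length $n$ satisfies $E$; validity then follows because every reachable marking is reachable by a run of some finite length.

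For the base case $n = 0$, the only marking reachable by the empty run is $\marking_0 = \marking$ itself, which satisfies $E$ by hypothesis.

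For the inductive step, assume the claim holds for runs of length $n$, and consider a run of length $n+1$. Such a run decomposes into a run $\runldots{\marking_0}{t_1}{\sigma_1}{\marking_1}{\marking_{n-1}}{t_n}{\sigma_n}{\marking_n}$ of length $n$ followed by a single step $\step{\marking_n}{t_{n+1}}{\sigma_{n+1}}{\marking_{n+1}}$. By the induction hypothesis, $\marking_n$ satisfies $E$. Since $E$ is $t_{n+1}$-stable in $(P,T)$ and $\step{\marking_n}{t_{n+1}}{\sigma_{n+1}}{\marking_{n+1}}$ is a step of $(P,T)$ whose source $\marking_n$ satisfies $E$, the definition of $t_{n+1}$-stability yields that the target $\marking_{n+1}$ satisfies $E$. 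This completes the induction, and hence the proof.

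I do not expect any genuine obstacle here: the argument is the standard \emph{inductive invariant} schema, so the only point requiring care is the bookkeeping that guarantees each reachable marking is the endpoint of a finite run and that the transition fired in the last step is among those for which $t$-stability is assumed. The hypothesis that $E$ is $t$-stable for \emph{every} $t\in T$ is precisely what makes the inductive step go through irrespective of which transition is fired last.
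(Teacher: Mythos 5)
Your proof is correct and follows essentially the same route as the paper's own proof: induction on the length of the run, with the initial-satisfaction hypothesis handling the base case and $t$-stability of the last fired transition carrying the inductive step.
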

A \emph{place invariant} $\kvector$ is a simple $P$-vector such that for each $t\in T$, we have $\kvector\termMult\tdelta = \gzero$. Then, the homogeneous equation induced by $\kvector$ is stable:
\begin{lemma}
\label{lem:invariants}
Let $(P,T)$ be an \apn,
$\group$ be an abelian group, and $E$ be a homogeneous $P$-equation over $\group$ given by a simple $P$-vector $\kvector\in\vectors{\group}$. Let $t\in T$ and $\kvector\termMult\tdelta = \gzero$. Then, $E$ is $t$-stable in $(P,T)$. 
\end{lemma}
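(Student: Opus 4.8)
The plan is to unfold the definition of $t$-stability and reduce it to a short linear computation. So I would fix an arbitrary step $\step{\marking}{t}{\sigma}{\marking^\prime}$ of $(P,T)$ such that $\marking$ satisfies $E$, and show that $\marking^\prime$ satisfies $E$ as well, i.e. that $\kvector\termMult\marking^\prime = \gzero$. By the definition of a step, $\sigma$ is an assignment and $\marking^\prime = \marking + \val[\sigma]{\tdelta}$, so the whole argument amounts to analysing the single polynomial $\kvector\termMult(\marking + \val[\sigma]{\tdelta})\in\polys{\group}{\terms}$.

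The first key step is distributivity of the Cauchy product over addition in its $\ints$-valued argument. Since $\marking,\val[\sigma]{\tdelta}\in\vectors{\ints}$, I would argue directly from the pointwise definitions of $\gplus$ and of $\termMult$ on $P$-vectors, together with the distributive law $(z+z')a = za\gplus z'a$ of the scalar product, that
\[
 \kvector\termMult\marking^\prime \;=\; \kvector\termMult\marking \;\gplus\; \kvector\termMult\val[\sigma]{\tdelta}\enspace.
\]
The first summand vanishes because $\marking$ satisfies $E$, so it only remains to show $\kvector\termMult\val[\sigma]{\tdelta} = \gzero$.

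The second---and only substantial---step is to commute the assignment $\sigma$ past the Cauchy product. Because $\sigma$ is an assignment, $\val[\sigma]{\tdelta} = \subst[\sigma]{\tdelta}$, and I would invoke the observation recorded right after the definition of $P$-vectors, namely $\kvector\termMult(\subst[\sigma]{\tdelta}) = \subst[\sigma]{(\kvector\termMult\tdelta)}$. (This observation in turn rests on $\subst[\sigma]{(\varrho\termMult\theta)} = \varrho\termMult\subst[\sigma]{\theta}$, reflecting that the term product replaces every variable of the left factor, so that a subsequent substitution acts only on the right factor.) Applying the hypothesis $\kvector\termMult\tdelta = \gzero$ then gives $\subst[\sigma]{(\kvector\termMult\tdelta)} = \subst[\sigma]{\gzero} = \gzero$, since substituting into the empty polynomial yields the empty polynomial. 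Combining the two steps yields $\kvector\termMult\marking^\prime = \gzero$, as required.

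I do not anticipate a genuine obstacle here: the lemma is essentially the statement that $\marking\mapsto\kvector\termMult\marking$ is linear and annihilates the effect $\val[\sigma]{\tdelta}$ of every firing. The only point demanding care is that the place-invariant condition $\kvector\termMult\tdelta = \gzero$ is a polynomial identity over \emph{all} terms, whereas stability concerns the concrete effect $\val[\sigma]{\tdelta}$ after applying a particular firing mode; the commutation lemma for substitution and the Cauchy product is exactly what bridges this gap, so I would cite it explicitly rather than re-proving it inline.
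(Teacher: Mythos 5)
Your proposal is correct and follows essentially the same route as the paper's own proof: both reduce the claim to the identity $\kvector\termMult\val[\sigma]{\tdelta} = \val[\sigma]{(\kvector\termMult\tdelta)} = \val[\sigma]{\gzero} = \gzero$ via the recorded observation that substitution commutes with the Cauchy product, and then conclude by distributivity that $\kvector\termMult\marking^\prime = \kvector\termMult\marking \gplus \kvector\termMult\val[\sigma]{\tdelta} = \gzero$. The only difference is the order of the two steps (you apply distributivity first, the paper commutes the assignment first), which is immaterial.
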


\section{Contributions}
\label{sec:results}

We summarize our contributions in the form of two main theorems which we prove in the subsequent sections. Our first contribution is a proof that validity of a given $P$-equation in an \apn is undecidable. The proof can be found in \sref{sec:undecidability} and bases on a reduction of the halting problem of Minsky machines. 

\begin{theorem}
\label{thm:undecidable}
Let $(P,T,\marking)$ be an \apn and $E$ a homogeneous $P$-equation. Then, validity of 
$E$ in $(P,T,\marking)$ is undecidable.
\end{theorem}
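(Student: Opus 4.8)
The plan is to reduce the halting problem of two-counter Minsky machines, which is undecidable, to the validity problem for homogeneous $P$-equations. Given a Minsky machine $M$ with instructions $q_0,\ldots,q_\ell$ operating on two counters $c_1,c_2$, I would construct an \apn $N_M$ together with a homogeneous $P$-equation $E$ such that $E$ is valid in $N_M$ if and only if $M$ does not halt. Since deciding whether $M$ halts (equivalently, whether it does not halt) is undecidable, any decision procedure for validity would decide halting, which proves the claim.

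For the encoding I choose a signature containing a unary symbol $\dot{s}$ and a constant $\dot{z}$, and represent the counter value $n\in\nats$ by the ground term $\dot{s}^n(\dot{z})\in\gterms$. The place set $P$ consists of one control place $q_j$ per instruction, one place per counter, and a dedicated place $h$; the initial marking puts a single control token (a fixed constant, say $\dot{z}$) on $q_0$, the terms for the initial counter values on the counter places, and leaves $h$ empty. Each instruction becomes one or two transitions built from simple, semi-positive $P$-vectors $\tminus$ and $\tplus$: an increment of $c_1$ at $q_j$ with successor $q_k$ has $\tminus$ consuming the control token from $q_j$ and a token $X$ from $c_1$, and $\tplus$ producing the control token on $q_k$ and $\dot{s}(X)$ on $c_1$. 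The crucial conditional jump ``if $c_1=0$ goto $q_k$ else decrement and goto $q_m$'' is realized by two transitions: the zero branch consumes $\dot{z}$ from $c_1$, and the nonzero branch consumes $\dot{s}(X)$ from $c_1$ and produces $X$. The halt instruction $q_h$ gets a transition moving its control token onto $h$.

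The main work is to establish faithfulness of the simulation. I would first record two structural invariants: every reachable marking carries exactly one control token (each transition consumes and produces exactly one), and each counter place carries exactly one ground token of the form $\dot{s}^n(\dot{z})$. Under these invariants the enabling condition $\marking\ge\val{\tminus}$ exactly implements the counter pattern matching, since the unique counter token either equals $\dot{z}$ or is a successor term $\dot{s}(\sigma(X))$: the nonzero branch is enabled precisely when the counter is positive and the zero branch precisely when it is $\dot{z}$, and no firing mode admits any unintended behavior. An induction on the length of runs then yields a bijection between runs of $N_M$ and computations of $M$, so that $h$ becomes marked in some reachable marking if and only if $M$ reaches $q_h$, i.e.\ halts.

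It remains to choose $E$ so that satisfaction detects a token on $h$. I take $E$ over $\group=\ints$ given by the simple $P$-vector $\kvector$ with $\kvector(h)=\ml{x}{1}$ for a variable $x$ and $\kvector(p)=\gzero$ for all $p\neq h$. For any marking $\marking$ we then have $\kvector\termMult\marking = \ml{x}{1}\termMult\marking(h) = \marking(h)$, which is $\gzero$ if and only if $h$ is unmarked. Hence $\marking$ satisfies $E$ iff $h$ is empty, so $E$ is valid in $N_M$ iff no reachable marking marks $h$ iff $M$ does not halt, completing the reduction. The one subtle point---and the feature separating \apns from ordinary place/transition nets, where reachability and hence this problem would be decidable---is the zero test, which is obtained for free from unification in firing modes via the competing input patterns $\dot{z}$ and $\dot{s}(X)$; everything else is a routine control-flow and counter encoding.
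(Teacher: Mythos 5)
Your proposal is correct and takes essentially the same route as the paper: a reduction from the Minsky-machine halting problem, with counter values encoded as unary ground terms, the zero test obtained for free from unification against the competing input patterns (constant vs.\ successor term), and a homogeneous equation with coefficient $1$ on a variable-labelled halt place asserting that place stays empty (the paper simply uses the control place $q_n$ of the $HALT$ instruction instead of your extra place $h$). One detail to make explicit: your zero-test transition must also produce $\dot{z}$ back onto the counter place---as the paper's encoding does---since otherwise the counter token vanishes and your single-token-per-counter invariant (and hence faithfulness of the simulation) would fail.
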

\begin{proof}
Follows from Lemma~\ref{lem:minsky-halt} and Lemma~\ref{lem:step-minsky}. 
\end{proof}

\noindent
Our second contribution is a decidability proof for the stability of a homogeneous $P$-equation in an \apnstruc under the assumption that the coefficients stem from a cyclic group. Here, we develop a decidable, necessary and sufficient criterion, generalizing the invariant theorem (cf. \lref{lem:invariants}), in~\sref{sec:decide-stable}. 

\begin{theorem}
  \label{thm:decidable}
Let $(P,T)$ be an \apnstruc and $E$ be a homogeneous $P$-equation over a cyclic group, then
stability of $E$ in $(P,T)$ is decidable.
\end{theorem}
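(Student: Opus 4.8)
The plan is to decide $t$-stability for each transition $t\in T$ separately; since $E$ is stable iff it is $t$-stable for every $t$, and $T$ is finite, this suffices. Fix $t$ and consider a step $\step{\marking}{t}{\sigma}{\marking^\prime}$ whose source $\marking$ satisfies $E$. Writing $D:=\kvector\termMult\tdelta\in\polys{\group}{\terms}$ for the fixed balance polynomial of the equation and the transition, bilinearity of the Cauchy product together with the identity $\kvector\termMult\val{\tdelta}=\val{\kvector\termMult\tdelta}$ and $\kvector\termMult\marking=\gzero$ yields $\kvector\termMult\marking^\prime=\kvector\termMult\marking\gplus\val{\kvector\termMult\tdelta}=\val{D}$. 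Hence $\marking^\prime$ satisfies $E$ iff $\val{D}=\gzero$, and consequently $E$ is $t$-stable iff, for every assignment $\sigma$ that admits a marking satisfying $E$ and enabling $t$ in mode $\sigma$ (call such $\sigma$ \emph{admissible}), we have $\val{D}=\gzero$. This already exhibits the claimed generalisation of \Lref{lem:invariants}: the invariant condition $D=\gzero$ forces $\val{D}=\gzero$ for \emph{all} $\sigma$, whereas for completeness we need only control the admissible ones.

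First I would reduce ``$\val{D}=\gzero$ for all admissible $\sigma$'' to finitely many checks. Write $D=\bigoplus_{j}\ml{\theta_j}{g_j}$ with pairwise distinct $\theta_j\in\terms$ and $g_j\neq\gzero$. Since $\val{D}(\mu)=\bigoplus_{j:\,\val{\theta_j}=\mu}g_j$, we have $\val{D}=\gzero$ iff, for the partition of the finite index set induced by $j\equiv j^\prime\Leftrightarrow\val{\theta_j}=\val{\theta_{j^\prime}}$, every block $B$ satisfies $\bigoplus_{j\in B}g_j=\gzero$. As $\sigma$ ranges over all assignments these partitions range over a finite, computable family: a candidate block $B$ is realisable exactly when the terms it collects are simultaneously unifiable while the remaining terms can be kept distinct, which is decidable by computing most general unifiers and using that $\gterms$ is non-empty (and, where separation is needed, unbounded). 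Thus the satisfaction side collapses to finitely many coefficient conditions $\bigoplus_{j\in B}g_j=\gzero$.

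Next I would treat admissibility and pin down the role of cyclicity. Write $\tminus(p)=\ml{\theta_p^-}{a_p}$ and fix $\sigma$. A marking satisfying $E$ and enabling $t$ in mode $\sigma$ exists iff $\val{\tminus}$ can be augmented by a further marking $\marking^+$ so that $\kvector\termMult(\val{\tminus}\gplus\marking^+)=\gzero$, i.e.\ $\kvector\termMult\marking^+=\gminus(\kvector\termMult\val{\tminus})$. Because placing one token $\nu$ on a place $p$ contributes precisely the monomial $\ml{\kterm_p\termMult\nu}{\kcoeff_p}$, the attainable left-hand sides are exactly the non-negative integer combinations of these monomials; grouping by ground term, admissibility decomposes, for each $\mu$ in the support of $\kvector\termMult\val{\tminus}$, into a submonoid-membership question: does the coefficient of $\mu$ in $\gminus(\kvector\termMult\val{\tminus})$ lie in the submonoid of $\group$ generated by $\{\kcoeff_p\mid\exists\nu\in\gterms:\kterm_p\termMult\nu=\mu\}$? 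The generating sets are computable by one-sided matching of $\mu$ against the $\kterm_p$. When $\group$ is cyclic we have $\group\cong\ints$ or $\group\cong\ints/n\ints$, so both the coefficient conditions $\bigoplus_{j\in B}g_j=\gzero$ and the submonoid-membership conditions become systems of linear equations and congruences over $\ints$, which are decidable; this is exactly where cyclicity enters.

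The decision procedure then computes $D$, enumerates the finitely many realisable coincidence blocks, and for every block $B$ with $\bigoplus_{j\in B}g_j\neq\gzero$ tests whether some admissible $\sigma$ realises $B$; $E$ is $t$-stable iff no such witness exists. I expect the main obstacle to be the coupling performed in this last step: admissibility depends on the actual shapes of the ground terms $\val{\theta_j}$ and $\val{\kterm_p\termMult\theta_p^-}$, not merely on the block, because whether $\mu$ lies in the range of the map $\nu\mapsto\kterm_q\termMult\nu$ can change as the still-free part of $\sigma$ varies. The crux is therefore to show that, after fixing the most general unifiers realising a block, the relevant generating sets---and hence the whole admissibility constraint---stabilise into a single arithmetic system over the cyclic group (up to a bounded case analysis on the finitely many possible head symbols of the remaining parameters), so that the infinitely many $\sigma$ compatible with a block are captured by one decidable feasibility test rather than an unbounded search. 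Establishing this uniform finite encoding, and proving it sound and complete against all admissible $\sigma$, is the technical heart of the theorem.
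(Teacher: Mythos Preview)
Your reduction to ``$\val{D}=\gzero$ for all admissible $\sigma$'' is correct and coincides with the paper's \lref{lem:semantic-invariant}. From that point on, however, your route diverges substantially from the paper's, and the gap you yourself flag in the last paragraph is real and not obviously closable along the lines you sketch.

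The paper does not analyse admissibility via submonoid membership at all. Instead it introduces \emph{zeros}: vectors $\nu\in\nats^P$ with $\bigoplus_{p}\nu(p)\kcoeff_p=\gzero$ such that $\{\kterm_p\mid\nu(p)>0,\ \kcoeff_p\neq\gzero\}$ is simultaneously unifiable. Every $E$-satisfying marking decomposes as a sum of \emph{implementations} of zeros (\lref{lem:markings-solutions}); from a set $\spset$ of zeros one \emph{derives} substitutions $\delta$ by choosing, for each $q\in\pre{t}$, a zero $\nu_q\in\spset$ with $\nu_q(q)\geq1$ and solving a small unification problem. The admissible firing modes are exactly the realizations of substitutions derivable from the set of all zeros (\lref{lem:realization-enabling}), and $t$-stability is equivalent to $\kvector\termMult(\subst[\delta]{\tdelta})=\gzero$ for every derivable $\delta$ (\lref{lem:homo-stable-charac}). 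Cyclicity is used \emph{only} to prove that a finite spanning set of zeros exists and is computable (\lref{lem:finite-spanning-set-computable}); this bounds the derivable $\delta$ to a finite computable set, and each check $\kvector\termMult(\subst[\delta]{\tdelta})=\gzero$ is a syntactic computation on polynomials.

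Your use of cyclicity is different---you invoke it to decide submonoid membership---and this is where the argument stalls. For a \emph{fixed} $\sigma$ your admissibility criterion is correct, but you need it uniformly over all $\sigma$ refining a chosen coincidence pattern on $D$. Two couplings obstruct the ``bounded case analysis on head symbols'' you propose. First, the generating set $\{\kcoeff_p\mid\exists\nu\in\gterms:\kterm_p\termMult\nu=\mu\}$ depends on whether $\kterm_p$ matches $\mu$; since $\termMult$ substitutes the \emph{same} term for every variable of $\kterm_p$, matching forces equality of the subterms of $\mu$ sitting at all variable positions of $\kterm_p$, and these subterms lie at unbounded depth and their equality is governed by $\sigma'$, not by any fixed-depth prefix. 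Second, the target coefficients in $-\kvector\termMult\val{\tminus}$ depend on a \emph{second} coincidence pattern among the terms $\kterm_q\termMult\theta_q^-$ for $q\in\pre{t}$, which shares variables with the first pattern. You have not shown that these interacting constraints reduce to finitely many arithmetic systems, and I do not see that a head-symbol case split suffices. The paper's zero machinery is precisely the device that avoids this: instead of asking which $\kterm_p$ can absorb the deficit at each $\mu$, it enumerates up front the finitely many symbolic shapes an $E$-satisfying marking can take on $\pre{t}$, and reads off the admissible $\sigma$ as realizations of finitely many most general unifiers.
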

\begin{proof}
Follows from Lemma~\ref{lem:given-finite-spanning-set-decidable} and Lemma~\ref{lem:finite-spanning-set-computable}. 
\end{proof}


\section{Undecidability of Validity of Homogeneous Equations}
\label{sec:undecidability}

In this section, we give short description how to encode a
\emph{Minsky Machine}~\cite{Minsky1967} $M$ into an \apn $N_M$ using the Herbrand structure.
Then, the halting problem in the Minsky Machine reduces to validity of an equation.
This proof technique has been used before for Petri Nets, for example in \cite{ReynierS09}.
First, we recall the required notions of a Minsky machine, its states and its steps: 
 
\begin{definition}[Minsky machine]
A \emph{Minsky Machine} $M=(\mathcal{I}, \mathcal{R})$ consists of number of registers 
$\mathcal{R}\in\nats$ and a sequence $\mathcal{I}=I_1,\dots,I_n$ of 
instructions, where each instruction
$I_i \in \{ INC(r,z) \mid 1\leq r\leq\mathcal{R}, 1\leq z \leq n \} \cup
\{ JZ(r,z_1,z_2) \mid  1\leq r\leq\mathcal{R}, 1\leq z_1 \leq n-1, 1\leq z_2 \leq n-1 \}$
and $I_n = HALT$.

Every tuple $(\rho,\ell)\in\nats^{\mathcal{R}}\times\{1,\dots,n\}$ is a \emph{state} of $M$.
If $I_\ell = INC(r,z)$, then $(\rho,\ell)\rightarrow(\rho',z)$ is a step in $M$ with
$\rho'(r) = \rho(r) +1$ and $\rho'(q) = \rho(q)$ for all $q\not=r$.
If $I_\ell = JZ(r,z_1,z_2)$ and $\rho(r)>0$, then $(\rho,\ell)\rightarrow(\rho',z_1)$ is a step in $M$ with
$\rho'(r) = \rho(r) -1$ and $\rho'(q) = \rho(q)$ for all $q\not=r$.
If $I_\ell = JZ(r,z_1,z_2)$ $\rho(r)=0$, then $(\rho,\ell)\rightarrow(\rho,z_2)$ is a step.
We denote the reflexive transitive closure of $\rightarrow$ with $\rightarrow^*$.
\end{definition}
We recall that the halting problem for Minsky machines is undecidable:  
\begin{lemma}[\cite{Minsky1967}]
\label{lem:minsky-halt}
Let $M$ be a Minsky Machine.
It is undecidable, whether $M$ halts, i.e.
the following problem is undecidable: $\exists\rho\in\nats^\mathcal{R}$ such that
$(0,1) \rightarrow^* (\rho,n)$.
\end{lemma}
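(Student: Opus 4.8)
The plan is to prove this by reduction from a problem already known to be undecidable, namely the halting problem of deterministic Turing machines started on the blank tape. The goal is to exhibit a computable map $T \mapsto M_T$ sending each Turing machine $T$ to a Minsky machine $M_T$ such that $T$ halts on the blank tape if and only if $M_T$ halts in the sense of the lemma, i.e. $(0,1) \rightarrow^* (\rho, n)$ for some $\rho \in \nats^{\mathcal{R}}$. Since blank-tape halting is undecidable, any decision procedure for Minsky halting would, composed with this computable map, decide blank-tape halting as well---a contradiction. This is precisely the classical Turing-completeness of register machines due to \cite{Minsky1967}.

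The heart of the construction is a faithful simulation of $T$ by $M_T$. I would encode a configuration of $T$ using the registers together with the Minsky program counter: the finite control state of $T$ is folded into the instruction index $\ell$, while the contents of the two halves of the tape to the left and to the right of the head are stored, digit by digit in base $k = |\Gamma|$ (the tape alphabet size), as the values of two dedicated registers, with a few further registers kept as scratch space. Each transition of $T$ is then realized by a fixed block of $INC$ and $JZ$ instructions that (i) reads the scanned symbol by extracting the least significant base-$k$ digit of the appropriate register, (ii) overwrites it according to the transition function of $T$, and (iii) shifts the head by transferring a digit between the two half-tape registers, which amounts to multiplying one encoding by $k$ and dividing the other by $k$ with remainder. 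To justify that such blocks exist I would first show that the usual register-machine primitives---test-for-zero (directly $JZ$), unconditional jump, addition or subtraction of a constant, copying a register, and multiplication or division by the constant $k$---are all programmable using only $INC$ and $JZ$ together with auxiliary registers, via the standard copy-through-a-temporary and repeated-addition idioms.

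Correctness would then follow by induction on the number of simulated steps, showing that the states reachable by $M_T$ from $(0,1)$ correspond exactly to the configurations of $T$ reachable from the blank initial configuration, and that $M_T$ reaches its instruction $I_n = HALT$ precisely when $T$ enters a halting state; the all-zero initial register valuation corresponds to the blank tape. The map $T \mapsto M_T$ is manifestly computable, being a purely syntactic assembly of finitely many fixed instruction blocks determined by the finite description of $T$. The step I expect to be the main obstacle is implementing multiplication and division by the base $k$ (equivalently, the push and pop of tape symbols) using only $INC$ and $JZ$: this is the technical core of the simulation. It is routine once sufficiently many scratch registers are allowed---which the lemma permits, as $\mathcal{R}$ is an unconstrained parameter---so I would deliberately avoid the classical two-register optimization, which would instead require a G\"odel-style $2^a 3^b$ encoding and render the digit operations considerably more delicate.
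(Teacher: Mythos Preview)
The paper does not prove this lemma at all: it is stated with a citation to \cite{Minsky1967} and treated as a known classical result, with no argument given in the body or the appendix. Your proposal, by contrast, sketches the standard proof---a computable reduction from the blank-tape halting problem for Turing machines to the halting problem for register machines, via the two-stack encoding of the tape in base $|\Gamma|$ and implementation of the requisite arithmetic (digit extraction, multiplication and division by the base) using $INC$/$JZ$ and scratch registers. This is essentially Minsky's original argument, and it is correct as outlined; your decision to allow arbitrarily many registers rather than squeezing down to two is sensible and avoids the G\"odel-numbering detour. So you are not diverging from the paper's proof so much as supplying one where the paper simply defers to the literature.
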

To reduce the halting problem, we encode a Minsky Machine into an \apnstruc.
\begin{definition}[Encoding of Minsky Machine]
Let $M$ be a Minsky Machine $M$,
then the \apnstruc $N_M$ encodes $M$, if:
\begin{itemize}
 \item The signature is $\Sigma_M = \{ \dot{f}/1, \dot{c}/0\}$,
 \item the set of places is $P = \{ p_r \mid 1\leq r \leq \mathcal{R}\} \cup \{ q_i 
\mid 1\leq i \leq n \}$,
\item for every $INC$-instruction $I_i$, let $t_i$ be the transition with the pattern 
shown in \fref{fig:inc-t},
\item and
for every $JZ$-instruction $I_i$ let $t_i$ and 
$t_i^\prime$ be the transitions following the pattern shown in \fref{fig:jz-t}.
\end{itemize}
\end{definition}

\begin{figure}
 \begin{subfigure}{0.45\textwidth}
  
\definecolor{C000000}{HTML}{000000}
\definecolor{CFFFFFF}{HTML}{FFFFFF}

\begin{tikzpicture}[inner sep=0pt, outer sep=0pt, node distance=0.5cm]
\node (n0) [minimum height=15.0pt,minimum width=15.0pt,at={(529.68088448pt,-371.50278016pt)}, draw, line width=1.0pt, shape=circle] {};\node[above of=n0]{\bsptext{$q_i$}};
\node (n1) [minimum height=15.0pt,minimum width=15.0pt,at={(630.18019456pt,-371.50278016pt)}, draw, line width=1.0pt, shape=circle] {};\node[above of=n1]{\bsptext{$q_z$}};
\node (n2) [minimum height=15.0pt,minimum width=15.0pt,at={(579.93053952pt,-418.14344064pt)}, draw, line width=1.0pt, shape=circle] {};\node[left of=n2]  {\bsptext{$p_r$}};
\node (n3) [minimum height=15.0pt,minimum width=15.0pt,at={(579.93053952pt,-371.50278016pt)}, fill=CFFFFFF, draw, line width=1.0pt, shape=rectangle] {};\node[above of=n3]{\bsptext{$t_i$}};
\node (n4) [minimum height=21.873536pt,minimum width=66.17587200000003pt,at={(602.79044288pt,-363.7800064pt)}, line width=1.0pt, shape=rectangle] {\bsptext{$\dot{c}$}};
\node (n5) [minimum height=21.873536pt,minimum width=66.17587200000003pt,at={(551.4502688pt,-362.77337344pt)}, line width=1.0pt, shape=rectangle] {\bsptext{$\dot{c}$}};
\node (n6) [minimum height=21.873536pt,minimum width=66.17587200000003pt,at={(604.59225856pt,-395.74460928pt)}, line width=1.0pt, shape=rectangle] {\bsptext{$\dot{f}(X)$}};
\node (n7) [minimum height=4.6909875200000215pt,minimum width=20.536481280000032pt,at={(562.28639296pt,-395.74460928pt)}, line width=1.0pt, shape=rectangle] {\bsptext{$X$}};

\path [-latex',line width=1.0pt,draw,](n2) edge  [bend left] (n3);

\path [-latex',line width=1.0pt,draw,](n3) edge  [bend left] (n2);

\path [-latex',line width=1.0pt,draw,](n0)--(n3);

\path [-latex',line width=1.0pt,draw,](n3)--(n1);
\end{tikzpicture}

  \caption{Encoding an instruction $I_i = INC(r,z)$}
  \label{fig:inc-t}
 \end{subfigure}
 \begin{subfigure}{0.55\textwidth}
  
\definecolor{CFFFFFF}{HTML}{FFFFFF}
\definecolor{C000000}{HTML}{000000}

\begin{tikzpicture}[inner sep=0pt, outer sep=0pt, node distance=0.5cm]
\node (n0) [minimum height=15.0pt,minimum width=15.0pt,at={(582.988670720001pt,-409.64398592pt)}, fill=CFFFFFF, draw, line width=1.0pt, shape=rectangle] {};\node[below of=n0]  {\bsptext{$t'_i$}};
\node (n1) [minimum height=15.0pt,minimum width=15.0pt,at={(507.3671872pt,-371.50278016pt)}, draw, line width=1.0pt, shape=circle] {};\node[below of=n1]  {\bsptext{$q_i$}};
\node (n2) [minimum height=15.0pt,minimum width=15.0pt,at={(633.415554560001pt,-335.10278784pt)}, draw, line width=1.0pt, shape=circle] {};\node[below of=n2]  {\bsptext{$q_{z_1}$}};
\node (n3) [minimum height=15.0pt,minimum width=15.0pt,at={(582.988670720001pt,-371.50278016pt)}, draw, line width=1.0pt, shape=circle] {};\node[left of=n3]  {\bsptext{$p_r$}};
\node (n4) [minimum height=15.0pt,minimum width=15.0pt,at={(582.988670720001pt,-335.10278784pt)}, fill=CFFFFFF, draw, line width=1.0pt, shape=rectangle] {};\node[above of=n4]{\bsptext{$t_i$}};
\node (n5) [minimum height=21.873536pt,minimum width=66.17587200000003pt,at={(608.576606720001pt,-328.23035904pt)}, line width=1.0pt, shape=rectangle] {\bsptext{$\dot{c}$}};
\node (n6) [minimum height=21.873536pt,minimum width=66.17587200000003pt,at={(528.40966656pt,-336.9482816pt)}, line width=1.0pt, shape=rectangle] {\bsptext{$\dot{c}$}};
\node (n7) [minimum height=21.873536pt,minimum width=66.17587200000003pt,at={(526.11903104pt,-407.90277248pt)}, line width=1.0pt, shape=rectangle] {\bsptext{$\dot{c}$}};
\node (n8) [minimum height=21.873536pt,minimum width=66.17587200000003pt,at={(606.02100352pt,-416.8398656pt)}, line width=1.0pt, shape=rectangle] {\bsptext{$\dot{c}$}};
\node (n9) [minimum height=21.873536pt,minimum width=66.17587200000003pt,at={(598.84582144pt,-390.57338304pt)}, line width=1.0pt, shape=rectangle] {};\node[below right =4.0pt and 0.6660610000000133pt of n9.north west,align=center,text width=64.84375,font=\fontfamily{phv}\fontsize{12}{13}\selectfont]  {\bsptext{$\dot{c}$}};
\node (n10) [minimum height=21.873536pt,minimum width=66.17587200000003pt,at={(565.818359040001pt,-393.72380032pt)}, line width=1.0pt, shape=rectangle] {};\node[below right =-0.09521399999999858pt and 0.6660610000000133pt of n10.north west,align=center,text width=64.84375,font=\fontfamily{phv}\fontsize{12}{13}\selectfont]  {\bsptext{$\dot{c}$}};
\node (n11) [minimum height=21.873536pt,minimum width=66.17587200000003pt,at={(598.84582144pt,-353.53955584pt)}, line width=1.0pt, shape=rectangle] {\bsptext{$X$}};
\node (n12) [minimum height=21.873536pt,minimum width=66.17587200000003pt,at={(559.414000640001pt,-356.97632384pt)}, line width=1.0pt, shape=rectangle] {};\node[below right =-0.09521399999999858pt and -6.939407749999987pt of n12.north west,align=center,text width=80.0546875,font=\fontfamily{phv}\fontsize{12}{13}\selectfont]  {\bsptext{$\dot{f}(X)$}};
\node (n13) [minimum height=15.0pt,minimum width=15.0pt,at={(633.415554560001pt,-409.64398592pt)}, draw, line width=1.0pt, shape=circle] {};\node[below of=n13]  {\bsptext{$q_{z_2}$}};

\path [-latex',line width=1.0pt,draw,](n3) edge  [bend right] (n0);

\path [-latex',line width=1.0pt,draw,](n0) edge  [bend right] (n3);

\path [-latex',line width=1.0pt,draw,](n3) edge  [bend left] (n4);

\path [-latex',line width=1.0pt,draw,](n4) edge  [bend left] (n3);

\path [-latex',line width=1.0pt,draw,](n1) edge  [bend left] (n4);

\path [-latex',line width=1.0pt,draw,](n1) edge  [bend right] (n0);

\path [-latex',line width=1.0pt,draw,](n4)--(n2);

\path [-latex',line width=1.0pt,draw,](n0)--(n13);
\end{tikzpicture}

  \caption{Encoding an instruction $I_i = JZ(r,z_1,z_2)$}
  \label{fig:jz-t}
 \end{subfigure}
 \caption{Encoding Minsky Machines into \apns}
\end{figure}

\begin{definition}
\label{def:marking-minsky}
 Let $(\rho,\ell)\in\nats^\mathcal{R}$ be a state of $M$. 
 For $x\in\nats$, we define $\theta_x\in\terms$ by 
 \begin{equation*}
  \theta_x := \begin{cases}
             \dot{c} & \text{ if }x=0\\
             \dot{f}(\theta_{x-1}) & \text{ otherwise}.
            \end{cases}
 \end{equation*}
Then, we define the marking $\marking^\rho_\ell \in \markings$ of $N_M$ as 
follows for $p\in P$ and $\theta\in\Theta$:
 \[\marking(p) :=
 \begin{cases}
  \ml{\dot{c}}{1} & \text{ if } p=q_\ell \\
  \ml{\theta_{\rho(r)}}{1} & \text{ if } p=p_r \\
  0 & \text{ otherwise }
 \end{cases} \] 
\end{definition}
Now, we can relate the steps of a Minsky Machine $M$ to the steps
of the encoding $N_M$.

\begin{lemma}
\label{lem:minsky-step}
 Let $(\rho,\ell),(\rho',\ell')$ be states of $M$ with $(\rho,\ell)\rightarrow(\rho',\ell')$. Then: 
 \begin{enumerate}
  \item There exists a step $\step{\marking^\rho_\ell}{t}{\sigma}{\marking^\prime}$ of $N_M$. 
  \item If $\step{\marking^\rho_\ell}{t}{\sigma}{\marking^\prime}$ is a step of $N_M$, then $\marking^\prime = \marking^{\rho^\prime}_{\ell^\prime}$. 
 \end{enumerate}
\end{lemma}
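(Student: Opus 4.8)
The plan is to prove both parts by a case analysis on the active instruction $I_\ell$, and within each case first to exhibit an enabling firing mode (Part~1) and then to argue that \emph{every} enabling firing mode is forced and yields the intended successor marking (Part~2). Two structural observations drive the entire argument. First, $\marking^\rho_\ell$ places the single control token $\ml{\dot{c}}{1}$ on the unique program-counter place $q_\ell$, so that only a transition consuming from $q_\ell$---namely $t_\ell$, together with $t_\ell^\prime$ in the $JZ$ case---can be enabled. Second, each register place $p_r$ carries exactly the single token $\ml{\theta_{\rho(r)}}{1}$, so enabling forces the firing mode on the pattern variable $X$ by matching $X$ or $\dot{f}(X)$ against that one token.

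First I would treat $I_\ell = INC(r,z)$, where the Minsky step yields $\rho^\prime(r)=\rho(r)+1$, $\rho^\prime(q)=\rho(q)$ for $q\neq r$, and $\ell^\prime=z$. For Part~1 I pick the firing mode $\sigma$ with $\sigma(X)=\theta_{\rho(r)}$ and check $\marking^\rho_\ell \geq \val{\tminus}$ for $t=t_\ell$ of \fref{fig:inc-t}: the input requires $\ml{\dot{c}}{1}$ on $q_\ell$ and $\ml{\theta_{\rho(r)}}{1}$ on $p_r$, both present. Forming $\marking^\rho_\ell + \val{\tdelta}$ then removes the control token from $q_\ell$, deposits it on $q_z$, and replaces $\theta_{\rho(r)}$ on $p_r$ by $\dot{f}(\theta_{\rho(r)})=\theta_{\rho(r)+1}$, so the result is exactly $\marking^{\rho^\prime}_{z}$. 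For Part~2, the two observations leave $t_\ell$ as the only candidate and force $\sigma(X)=\theta_{\rho(r)}$, so the same computation shows the successor is necessarily $\marking^{\rho^\prime}_{z}$.

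Next I would treat $I_\ell = JZ(r,z_1,z_2)$, splitting on $\rho(r)$ and using the two transitions of \fref{fig:jz-t}. If $\rho(r)>0$, the Minsky step decrements to $\rho^\prime(r)=\rho(r)-1$ and jumps to $z_1$; here $\theta_{\rho(r)}=\dot{f}(\theta_{\rho(r)-1})$ matches the input pattern $\dot{f}(X)$ of $t_\ell$ under $\sigma(X)=\theta_{\rho(r)-1}$, and forming $\marking^\rho_\ell+\val{\tdelta}$---which removes $\dot{f}(\theta_{\rho(r)-1})$ from $p_r$, returns $\theta_{\rho(r)-1}$ there, and moves the control token to $q_{z_1}$---gives $\marking^{\rho^\prime}_{z_1}$. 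If $\rho(r)=0$, then $\theta_{\rho(r)}=\dot{c}$ matches the input pattern $\dot{c}$ of $t_\ell^\prime$, whose effect leaves $p_r$ unchanged and moves the control token to $q_{z_2}$, yielding $\marking^{\rho}_{z_2}=\marking^{\rho^\prime}_{z_2}$. The crucial point for Part~2 here is that $\dot{c}$ and $\dot{f}(\theta_{\rho(r)-1})$ are \emph{distinct} ground terms, so exactly one of the input patterns $\dot{c}$ and $\dot{f}(X)$ can be matched against the single token on $p_r$: when $\rho(r)>0$ only $t_\ell$ is enabled, and when $\rho(r)=0$ only $t_\ell^\prime$ is enabled. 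Hence the branch taken by any net step coincides with the branch of the deterministic Minsky step, and the successor marking is forced.

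I expect the main obstacle to be Part~2---showing that \emph{no} spurious step exists---rather than Part~1. It rests entirely on the Herbrand interpretation adopted in the definition of a step: without additional equality axioms, $\dot{c}$ is never equal to any $\dot{f}(\cdot)$, which simultaneously guarantees the mutual exclusivity of the two $JZ$ transitions and the uniqueness of the firing mode obtained by matching a pattern against the one token on $p_r$. Once mutual exclusivity and uniqueness of the firing mode are in place, both parts reduce to the routine arithmetic of adding $\val{\tdelta}$ to $\marking^\rho_\ell$ and comparing the result with $\marking^{\rho^\prime}_{\ell^\prime}$ entry by entry over $P=\{p_r\}\cup\{q_i\}$.
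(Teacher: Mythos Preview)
Your proposal is correct and follows essentially the same approach as the paper: a case analysis on the instruction $I_\ell$ (INC versus JZ, with the JZ case split on $\rho(r)=0$ versus $\rho(r)>0$), exhibiting the enabling firing mode and computing the successor marking componentwise. If anything, you are more explicit than the paper about Part~2---you spell out why only transitions consuming from $q_\ell$ can fire, why the firing mode on $X$ is forced by the single token on $p_r$, and why the Herbrand distinction between $\dot{c}$ and $\dot{f}(\cdot)$ makes the two $JZ$ transitions mutually exclusive---whereas the paper's proof leaves these uniqueness observations largely implicit.
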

%
Finally, we reduce the halting problem for $M$ to the validity of the homogeneous $P$-equation $q_n = 0$ in $(N_M,\marking^1_0)$.
The $P$-vector over $\ints$ that induces the $P$-equation is zero for all places $p\in P\setminus\{q_n\}$ and $1$ for $q_n$.
Inductively applying \lref{lem:minsky-step} reduces reachability of the $HALT$ state
in $M$ to non-emptiness of the place $q_n$ in $(N_M,\marking^1_0)$ and thus to validity of $q_n = 0$.

\begin{lemma}
\label{lem:step-minsky}
 The equation $q_n = 0$ is valid in $(N_M,\marking^1_0)$ if and only if the Minsky Machine $M$ does not halt.
\end{lemma}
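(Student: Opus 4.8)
The plan is to reduce validity of $q_n = 0$ to non-reachability of the $HALT$ state, by establishing a precise correspondence between the reachable markings of $(N_M,\marking^1_0)$ and the reachable states of $M$, where the initial marking $\marking^1_0$ encodes the initial Minsky state $(0,1)$. First I would pin down what satisfaction of the equation means. The inducing $P$-vector $\kvector$ is $\gzero$ on every place except $\kvector(q_n) = \ml{X}{1}$ for a variable $X$. Unfolding the Cauchy product and using that the term product satisfies $X\termMult\mu = \mu$ for every ground term $\mu$, the product $\kvector\termMult\marking$ coincides with $\marking(q_n)$, read as a polynomial over $\ints$. Hence a marking $\marking$ satisfies $q_n = 0$ if and only if place $q_n$ carries no token, that is $\marking(q_n) = \gzero$.

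The key step is then to show that the reachable markings of $(N_M,\marking^1_0)$ are exactly $\{\marking^\rho_\ell \mid (0,1)\rightarrow^*(\rho,\ell)\}$, which I would prove by induction on the run length using \lref{lem:minsky-step}. The base case holds because $\marking^1_0$ is the encoding of $(0,1)$, which is reachable in $M$ by zero steps. For the inclusion $\supseteq$, any Minsky step $(\rho,\ell)\rightarrow(\rho',\ell')$ lifts to an $N_M$ step $\step{\marking^\rho_\ell}{t}{\sigma}{\marking^{\rho'}_{\ell'}}$: part~1 of \lref{lem:minsky-step} yields some step, and part~2 forces its target to be the encoding $\marking^{\rho'}_{\ell'}$. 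For the converse inclusion $\subseteq$, consider a reachable marking obtained by a run whose penultimate marking is, by the induction hypothesis, an encoding $\marking^\rho_\ell$ with $(0,1)\rightarrow^*(\rho,\ell)$; the last step goes out of $\marking^\rho_\ell$, and since $M$ is deterministic, part~2 of \lref{lem:minsky-step} identifies its target as $\marking^{\rho'}_{\ell'}$ for the unique Minsky successor $(\rho',\ell')$, so $(0,1)\rightarrow^*(\rho',\ell')$.

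Finally, combining the two characterizations gives the equivalence: $q_n = 0$ is valid in $(N_M,\marking^1_0)$ iff every reachable marking leaves $q_n$ empty, iff no reachable state of $M$ has instruction pointer $n$, iff there is no $\rho$ with $(0,1)\rightarrow^*(\rho,n)$, which by the formulation in \lref{lem:minsky-halt} is precisely the statement that $M$ does not halt. I expect the main obstacle to lie in the $\subseteq$ inclusion, where one must rule out spurious behaviour of $N_M$: from an encoding marking only the transition(s) of the current instruction may fire, the firing mode must be forced by the tokens present, and in particular the zero branch $t_i'$ of a $JZ$ instruction must be enabled exactly when $p_r$ holds the constant $\dot{c}$. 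This exact matching of $\dot{c}$ depends on interpreting terms over the Herbrand structure and is exactly the content of \lref{lem:minsky-step}; I would additionally note that no transition of $N_M$ consumes from $q_n$, so encoding markings with pointer $n$ are dead, which guarantees that every enabled step out of an encoding marking corresponds to a genuine Minsky step. The remaining work is the routine induction and this treatment of the $HALT$ markings.
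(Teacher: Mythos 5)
Your proposal is correct and follows essentially the same route as the paper: inductively applying Lemma~\ref{lem:minsky-step} to show that the reachable markings of $(N_M,\marking^1_0)$ are exactly the encodings $\marking^\rho_\ell$ of the reachable states of $M$, and then translating validity of $q_n=0$ into non-reachability of any state $(\rho,n)$, i.e.\ non-halting. The extra details you supply (unfolding $\kvector\termMult\marking=\marking(q_n)$, and noting that encodings of HALT states are dead since no transition consumes from $q_n$) are points the paper's terse proof leaves implicit, but they do not change the argument.
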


\section{Deciding Stability of Homogeneous Equations over Cyclic Groups}
\label{sec:decide-stable}

In this section, we show that stability of a homogeneous $P$-equation $E$ given by a simple $P$-vector $\kvector$ in an \apnstruc $N=(P,T)$ is decidable, 
if $\group$ is a cyclic group. 
To this end, we identify a decidable, necessary and sufficient condition for stability, which generalizes the necessary but not sufficient condition given by the classical invariant theorem (cf. \lref{lem:invariants}). 
We develop our condition based on the following lemma, which directly follows from applying additivity arguments to the definition of stability: 
\begin{lemma}
\label{lem:semantic-invariant}
 Let $t\in T$ be a transition. Then, the following statements are equivalent:
 \begin{enumerate}
  \item $E$ is $t$-stable. 
  \item For all steps $\step{\marking}{t}{\sigma}{\marking^\prime}$: If $\kvector\termMult\marking = \gzero$, then $\kvector\termMult\val{\tdelta} = \gzero$. 
 \end{enumerate}
\end{lemma}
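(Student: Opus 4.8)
The plan is to reduce the preservation condition to a single algebraic identity, exploiting that the Cauchy product is additive in its right argument. First I would record the key distributivity fact: for every $\kvector\in\vectors{\group}$ and all $\kvector_1,\kvector_2\in\vectors{\ints}$ we have $\kvector\termMult(\kvector_1+\kvector_2) = \kvector\termMult\kvector_1 + \kvector\termMult\kvector_2$ as polynomials in $\polys{\group}{\terms}$. This is immediate from the definitions: the Cauchy product of two polynomials $p_1\termMult p_2$ sums the scalar products $p_2(m_2)\,p_1(m_1)$ over the factorizations $m=m_1\termMult m_2$, and this is additive in the $\ints$-valued factor $p_2$ because the scalar product distributes over integer addition, i.e.\ $(z+z')a = za\gplus z'a$. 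Lifting to $P$-vectors via $(\kvector\termMult\kvector')(\theta)=\bigoplus_{p\in P}\kvector(p)\termMult\kvector'(p)$ preserves additivity, since a finite sum of additive maps is again additive.

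With this in hand, I would consider an arbitrary step $\step{\marking}{t}{\sigma}{\marking^\prime}$. By definition of a step, $\marking^\prime = \marking + \val{\tdelta}$, so additivity yields the identity
\begin{equation*}
  \kvector\termMult\marking^\prime \;=\; \kvector\termMult\marking \;+\; \kvector\termMult\val{\tdelta}.
\end{equation*}
This single identity drives both implications. For $(1)\Rightarrow(2)$, I assume $E$ is $t$-stable and take any step with $\kvector\termMult\marking=\gzero$, that is, $\marking$ satisfies $E$; stability gives $\kvector\termMult\marking^\prime=\gzero$, so the identity collapses to $\gzero = \gzero + \kvector\termMult\val{\tdelta}$, hence $\kvector\termMult\val{\tdelta}=\gzero$. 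For $(2)\Rightarrow(1)$, I assume the condition in (2) and take any step with $\marking$ satisfying $E$; then $\kvector\termMult\marking=\gzero$ and, by (2), $\kvector\termMult\val{\tdelta}=\gzero$, so the identity gives $\kvector\termMult\marking^\prime=\gzero+\gzero=\gzero$, i.e.\ $\marking^\prime$ satisfies $E$. As the step was arbitrary, $E$ is $t$-stable.

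I do not expect a serious obstacle: the entire content is the additivity of $\termMult$ combined with the fact that a step adds exactly $\val{\tdelta}$ to the current marking. The only point requiring a little care is making the additivity claim precise, since $\termMult$ couples a $\group$-valued factor with an $\ints$-valued factor through the scalar product; I would double-check that distributivity of the scalar product over integer addition together with commutativity of $\gplus$ justifies pulling the pointwise sum through both the inner Cauchy sum and the outer sum over $p\in P$, but this is routine.
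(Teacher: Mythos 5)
Your proof is correct and matches the paper's intent exactly: the paper gives no separate proof for this lemma, stating only that it ``directly follows from applying additivity arguments to the definition of stability,'' and your argument --- the identity $\kvector\termMult\marking^\prime = \kvector\termMult\marking + \kvector\termMult\val{\tdelta}$ obtained from $\marking^\prime = \marking + \val{\tdelta}$ and additivity of the Cauchy product in its $\ints$-valued argument --- is precisely that additivity argument, spelled out. Nothing is missing; the distributivity check you flag is indeed routine and holds because the scalar product satisfies $(z+z')a = za \gplus z'a$ and the finite sums over factorizations and over $p\in P$ can be split termwise.
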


\begin{figure}
\begin{center}

\definecolor{CE9E9E9}{HTML}{E9E9E9}
\definecolor{C000000}{HTML}{000000}
\begin{tikzpicture}[inner sep=0pt, outer sep=0pt, node distance=0.5cm]
\node (n0) [minimum height=17.359324123180798pt,minimum width=140.7558399999998pt,at={(1094.75424pt,-398.026044123181pt)}, fill=CE9E9E9, draw, line width=1.0pt, shape=rectangle] {\bsptext{Implementation}};
\node (n1) [minimum height=17.359324123180798pt,minimum width=140.7558399999998pt,at={(1094.75424pt,-450.518986184771pt)}, fill=CE9E9E9, draw, line width=1.0pt, shape=rectangle] {\bsptext{Satisfying Marking}};
\node (n2) [minimum height=17.359324123180798pt,minimum width=140.7558399999998pt,at={(1092.93312pt,-345.53310206159pt)}, fill=CE9E9E9, draw, line width=1.0pt, shape=rectangle] {\bsptext{Zero (\dref{def:zero})}};
\node (n3) [minimum height=17.359324123180798pt,minimum width=140.7558399999998pt,at={(1316.58976pt,-398.026044123181pt)}, fill=CE9E9E9, draw, line width=1.0pt, shape=rectangle] {\bsptext{Realization}};
\node (n4) [minimum height=17.359324123180798pt,minimum width=140.7558399999998pt,at={(1316.58976pt,-450.518986184771pt)}, fill=CE9E9E9, draw, line width=1.0pt, shape=rectangle] {\bsptext{Step from Satisfying Marking}};
\node (n5) [minimum height=17.359324123180798pt,minimum width=140.7558399999998pt,at={(1314.76864pt,-345.53310206159pt)}, fill=CE9E9E9, draw, line width=1.0pt, shape=rectangle] {\bsptext{Derivation}};
\node (n6) [minimum height=33.604480000000024pt,minimum width=97.82976000000008pt,at={(1122.33248pt,-424.272515153976pt)}, line width=1.0pt, shape=rectangle] {\bsptext{\lref{lem:markings-solutions}}};
\node (n7) [minimum height=33.604480000000024pt,minimum width=97.82976000000008pt,at={(1125.40448pt,-371.779573092385pt)}, line width=1.0pt, shape=rectangle] {\bsptext{\dref{def:implementation}}};
\node (n8) [minimum height=33.604480000000024pt,minimum width=97.82976000000008pt,at={(1203.85088pt,-337.410524123181pt)}, line width=1.0pt, shape=rectangle] {\bsptext{\dref{def:derivable}}};
\node (n9) [minimum height=33.604480000000024pt,minimum width=97.82976000000008pt,at={(1348.11488pt,-371.779573092385pt)}, line width=1.0pt, shape=rectangle] {\bsptext{\dref{def:realization}}};
\node (n10) [minimum height=33.604480000000024pt,minimum width=97.82976000000008pt,at={(1345.04288pt,-424.272515153976pt)}, line width=1.0pt, shape=rectangle] {\bsptext{\lref{lem:realization-enabling}}};

\path [-triangle 60,line width=1.0pt,draw,](n2)--(n5);

\path [-triangle 60,line width=1.0pt,draw,](n5)--(n3);

\path [triangle 60-triangle 60,line width=1.0pt,draw,](n3)--(n4);
\path (n3) ++(-3.999961679687885pt,-17.566796159835pt)node[align=center,text width=4.0,font=\fontfamily{phv}\fontsize{12}{13}\selectfont]{};

\path [-triangle 60,line width=1.0pt,draw,](n2)--(n0);

\path [triangle 60-triangle 60,line width=1.0pt,draw,](n0)--(n1);
\end{tikzpicture}

\end{center}
\caption{Overview of the proof of \tref{thm:decidable}}
\label{fig:proof-overview}
\end{figure}

\noindent \lref{lem:semantic-invariant} generalizes \lref{lem:invariants} in the sense that we can derive \lref{lem:invariants} from \lref{lem:semantic-invariant}, but not vice versa. However, the condition stated in \lref{lem:semantic-invariant} does not directly infer a decision procedure, because the set of steps $\step{\marking}{t}{\sigma}{\marking^\prime}$ with $\kvector\termMult\val{\tdelta} = \gzero$ is infinite, that is, one has to reason about infinitely many markings $\marking$ and firing modes $\sigma$. Our approach copes with this challenge by applying symbolic techniques, that is, we finitely characterize the infinite set of all such $\marking$ and $\sigma$ conveniently for computation. \fref{fig:proof-overview} summarizes the notions applied in our proof: We first symbolically describe the set of $E$-satisfying markings by means of \emph{\zeros} and their \emph{implementations}. Then, we \emph{derive} symbolically described firing modes from \zeros, and characterize stability by means of \emph{realizability}. 

In order to simplify notation, we fix for this section an \apnstruc $(P,T)$, an abelian group $\group$, and a homogeneous $P$-equation $E$ given by a simple vector $\kvector\in\vectors{\group}$. Moreover, we assume that for each $p\in P$, $\kvector(p)$ is the monomial $\ml{\kterm_p}{\kcoeff_p}$, that is, 
$\kcoeff_p = \kvector(p)(\kterm_p) \in \group$ is the coefficient of the only term $\kterm_p$ in $\support{\kvector(p)}$. 

Our first goal is to abstractly characterize infinite sets of $E$-satisfying markings by means of a \emph{\zero}. 
Intuitively, an $E$-satisfying marking assigns ``right number'' of a ``right kind of tokens'' to each place. 

\begin{definition}[\Zero]
\label{def:zero}
Let $\nu \ffrom P\fto \nats$ such that $\sum_{p\in P}\nu(p) \kcoeff_p = 0$.
If the unification problem $U = \{\kterm_p \doteq \kterm_{p^\prime} \mid 
p,p^\prime\in P, \kcoeff_p,\kcoeff_{p^\prime},\nu(p),\nu(p^\prime)\not= 0\}$ is 
solvable, 
$\nu$ is a \emph{\zero} of $E$, and we write $\underline{\nu}$ for the most general unification of $U$. 
\end{definition}
We observe that $0$ is always a zero. Furthermore, the sum of two 
\zeros $\nu_1,\nu_2$ yield again $\sum_{_p\in P}\left(\nu_1(p)+\nu_2(p)\right)=0$,
but the unification problem is not necessarily solvable.
However, a \zero may be the sum of other \zeros.

\newcommand{\pretablespace}{\vspace{0.5cm}}

\begin{figure}[t]
\begin{subfigure}{\textwidth}
\pretablespace
\begin{tabularx}{\textwidth}{|l|l|l|l|l|X|X|X|}
 \hline
  & \bsptext{A} & \bsptext{B} & \bsptext{C} & \bsptext{D} &
\Zero of $E_1$? & 
\Zero of $E_2$? & $\varrho(\nu_i)$ 
\\
 \hline
 $\nu_1$ & 0  & 1 & 0  & 3 &  yes &  no & $\dot{g}(\bsptext{B})$ \\
 $\nu_2$ & 5  & 0 & 4  & 0 &  yes &  no & $\dot{f}(\dot{g}(\bsptext{C}))$ \\
 $\nu_3$ & 0  & 2 & 0  & 6 &  yes &  no & $\dot{g}(\bsptext{B})$ \\
 $\nu_4$ & 1  & 1 & 1  & 2 &  no  &  yes & $\dot{c}$ \\
 $\nu_5$ & 2  & 0 & 0  & 4 &  no  &  yes & $\dot{c}$ \\
  \hline
\end{tabularx}
\caption{Zeros $\nu_1,\ldots,\nu_5\in\nats^P$}
\label{fig:zeros-solutions}
\end{subfigure}
\begin{subfigure}{\textwidth}
\pretablespace
\begin{tabularx}{\textwidth}{|l|l|l|l|l|X|X|X|}
\hline
& $\bsptext{A}$ & $\bsptext{B}$ & $\bsptext{C}$ & $\bsptext{D}$ &
impl.\ $\nu_1$ for $E_1$? & impl.\ $\nu_2$ for $E_1$? & impl.\ $\nu_5$ for $E_2$? \\
\hline
$\marking_1$ & $0$ & $\dot{c}$ & $0$ & $3\dot{g}(\dot{c})$ &
yes & no & no \\
$\marking_2$ & $0$ & $2\dot{f}(\dot{c})$ & $0$ & $6\dot{g}(\dot{f}(\dot{c}))$ &
yes & no & no \\
$\marking_3$ & $5\dot{g}(\dot{c})$ & $0$ & $4\dot{c}$ & $0$ &
no & yes & no \\
$\marking_4$ & $2\dot{g}(\dot{c})$ & $0$ & $0$ & $4\dot{c}$ &
no & no & yes \\
\hline
\end{tabularx}
\caption{Implementations of \zeros $\nu_1$ (w.r.t. $E_1$), $\nu_2$ (w.r.t. $E_1$) and $\nu_5$ (w.r.t. $E_2$)}
\label{fig:implementations}
\end{subfigure}
 \caption{Examples for \zeros, realizations, and implementations}
\end{figure}

\Fref{fig:zeros-solutions} shows some examples for \zeros using the net structure and
equations shown in \fref{fig:net-equations}. In this section, we ignore the place $\bsptext{E}$,
as it is irrelevant for enabling $\bsptext{t}$.
$\nu_1$ is a \zero of $E_1$ as $3-3=0$, and $\dot{g}(\bsptext{B}) 
\doteq \bsptext{D}$ can be unified with $D\mapsto\dot{g}(\bsptext{B})$.
$\nu_2$ is a \zero of $E_1$ as $20-20=0$ and $\bsptext{A}\mapsto\dot{g}(\bsptext{C})$ unifies
$\dot{f}(\bsptext{A}) \doteq \dot{f}(\dot{g}(\bsptext{C}))$.
For $\nu_4$ and $E_1$ we have $4+3-5-2=0$, but it is not a \zero of $E_1$ as 
$\dot{f}(\bsptext{A}) \doteq \dot{g}(\bsptext{B})$ cannot be unified.
$\nu_5$ is not a \zero for $E_1$ as $8-4\not=0$.
Regarding $E_2$,
$\nu_1$ and $\nu_2$ aren't \zeros as $6 \not \equiv_7 0$ and $15 \not \equiv_7 0$.
$\nu_4$ is a zero for $E_2$ as $3+4\equiv_7 0$ and $\bsptext{D}\mapsto\dot{c}$ unifies
$\dot{c}\doteq\bsptext{D}$.
Finally, $\nu_5$ is also a \zero of $E_2$, as $6+8\equiv_7=0$ and as for $\nu_4$ the 
unification problem is solvable as for $\nu_4$.

Because $\underline{\nu}$ is a unifier, applying $\underline{\nu}$ to $\kterm_p$ 
yields the same result for every $p\in P$ satisfying $\kcoeff_p\not=\gzero$ and 
$\nu(p)\not= 0$ . 
\begin{lemma}
Let $\nu$ be a \zero. The set $\{\subst[\underline{\nu}]{\kterm_p}\mid p\in P, \kcoeff_p \not = \gzero, \nu(p) \not = 0\}$ is singleton. 
\end{lemma}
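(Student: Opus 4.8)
The plan is to unfold the definition of ``unifier'': the claim is essentially a formalization of the observation stated just above it, namely that $\underline{\nu}$ identifies all the terms $\kterm_p$ that actually enter the unification problem $U$. So the whole statement should reduce to a one-line application of the fact that $\underline{\nu}$ is a unifier of $U$.

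First I would fix the index set $I := \{\, p\in P \mid \kcoeff_p \not= \gzero,\ \nu(p)\not=0 \,\}$, so that the set under consideration is exactly $\{\, \subst[\underline{\nu}]{\kterm_p} \mid p\in I \,\}$ and, by \dref{def:zero}, $U = \{\, \kterm_p \doteq \kterm_{p^\prime} \mid p,p^\prime \in I \,\}$. Since $\nu$ is a \zero, $U$ is solvable, so by the most-general-unifier lemma the substitution $\underline{\nu}$ exists and is in particular a unifier of $U$. Now I would take arbitrary $p,p^\prime \in I$: by construction the equation $\kterm_p \doteq \kterm_{p^\prime}$ lies in $U$, and the unifier property of $\underline{\nu}$ immediately gives $\subst[\underline{\nu}]{\kterm_p} = \subst[\underline{\nu}]{\kterm_{p^\prime}}$. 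Hence any two elements of $\{\, \subst[\underline{\nu}]{\kterm_p} \mid p\in I \,\}$ coincide, so the set contains at most one element.

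The substantive content is thus essentially immediate, and the only point requiring care — the place I would flag as the real (if minor) obstacle — is ruling out the empty case, i.e.\ ensuring the set is a \emph{genuine} singleton rather than empty. This is exactly the question of whether $I\neq\emptyset$, which need not follow from the \zero axioms alone: the trivial \zero $\nu = 0$, as well as any $\nu$ supported only on places with $\kcoeff_p = \gzero$, yields $I = \emptyset$. I would therefore read the statement as concerning \zeros that actually constrain some place, in which case $I\neq\emptyset$ and the ``at most one element'' bound sharpens to exactly one; in the degenerate case $I=\emptyset$ the set is empty and the claim is understood vacuously. No induction or deeper structural argument on terms is needed, since the solvability of $U$ — and hence the existence of the common value $\subst[\underline{\nu}]{\kterm_p}$ — is already guaranteed by the definition of a \zero.
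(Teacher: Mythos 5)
Your proof is correct and coincides with the paper's treatment: the paper offers no separate proof of this lemma, presenting it as an immediate consequence of $\underline{\nu}$ being a unifier of the problem $U$ from \dref{def:zero} (``Because $\underline{\nu}$ is a unifier, applying $\underline{\nu}$ to $\kterm_p$ yields the same result\ldots''), which is exactly your one-line argument. Your flagged edge case is a fair observation the paper glosses over: for the trivial \zero $\nu = 0$ (or any $\nu$ supported only on places with $\kcoeff_p = \gzero$) the set is empty rather than singleton, and indeed the subsequent definition of $\varrho(\nu)$ tacitly presupposes nonemptiness, so your reading of the statement as ``at most one element, exactly one in the non-degenerate case'' is the right one.
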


\begin{definition}[Result of the unification]
We define $\varrho(\nu)$ by $\{\varrho(\nu)\} = \{\subst[\underline{\nu}]{\kterm_p}\mid p\in P, \kcoeff_p \not = \gzero, \nu(p) \not = 0\}$
\end{definition}

\noindent Intuitively, an \emph{implementation} of a \zero $\nu$ is a marking which satisfies $E$ ``in the same way'' as $\nu$. Formally, we define this based on an assignment transforming the result of the unification to a marking. 

\begin{definition}[Implementation of a \zero]
\label{def:implementation}
 Let $\marking\in\markings$ be a marking and $\nu$ be a \zero for $E$.
 Then, $\marking$ \emph{implements} $\nu$, or: $\marking$ is an \emph{implementation} of $\nu$, if for all $p\in P$ with $\nu(p)\not=0$ and $\gamma_p\not=\gzero$: 
 \begin{enumerate}
  \item $\nu(p) = \sum_{\theta\in\support{\marking(p)}}\marking(p)(\theta)$, and
  \item there exists an assignment $\sigma$, such that
  $ \{\val{\varrho(\nu)}\} = \support{\kvector(p)\termMult\marking(p)}$.
 \end{enumerate}
\end{definition}

\noindent As an example, in \fref{fig:implementations}, the marking $\marking_1$ 
implements $\nu_1$ for $E_1$ as for assignment $\sigma_1$ with $\sigma_1(\bsptext{B})=\dot{c}$
we have $\val[\sigma_1]{B} = \dot{c} = \bsptext{D}\termMult\dot{g}(\dot{c})$. 
$\marking_2$ implements $\nu_1$ for $E_1$, because for assignment $\sigma_2$
with $\sigma_2(\bsptext{B})=\dot{f}(\dot{c})$, we have
$\val[\sigma_2]{\bsptext{B}} = 
\bsptext{D}\termMult\dot{f}(\dot{g}(\dot{c}))$.
$\marking_3$ implements $\nu_2$ for $E_1$, because for assignment $\sigma_3$
with $\sigma_3(\bsptext{C})=\dot{c}$, we have
$\val[\sigma_2]{\dot{f}(\dot{g}(\bsptext{C}))} = \dot{f}(\dot{g}(\bsptext{C})) =
\dot{f}(A)\termMult\dot{g}(\dot{c}) = \dot{f}(\dot{g}(\bsptext{C}))\termMult\dot{c}$.
Moreover, $\marking_4$ implements $\nu_5$ for $E_2$ as for assignment $\sigma_4$ with
$\sigma_4(\bsptext{D})=\dot{c}$ we have
$\val[\sigma_4]{\dot{c}} = \dot{c}\termMult\dot{g}(\dot{c}) = \bsptext{D}\termMult\dot{c}$.

Next, we show that the set of all \zeros exactly characterizes the set of all $E$-satisfying markings:
For every term $\omega$ used by an $E$-satisfying marking $\marking$ we can identify an implementation $\marking_\omega$ of a \zero.
Because the set of $E$-satisfying markings is closed under addition, the converse also holds.

\begin{lemma}
\label{lem:markings-solutions}
Let $\marking$ be a marking,
the following are equivalent: 
\begin{enumerate}
 \item $\kvector\termMult\marking = \gzero$.
 \item There exist zeros $\nu_1,\ldots,\nu_n$ of $E$, and markings $\marking_1,\ldots,\marking_n$,
 such that:
 $\marking = \sum_{1\leq i\leq n}\marking_i$ and $\marking_i$ implements $\nu_i$ for all $i=1,\dots,n$.
\end{enumerate}
\end{lemma}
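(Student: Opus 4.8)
The plan is to prove the two implications separately, treating $(2)\Rightarrow(1)$ as the routine direction and $(1)\Rightarrow(2)$ as the substantial one. For $(2)\Rightarrow(1)$ I would first observe that the operation $\kvector\termMult(\cdot)$ is additive (the Cauchy product distributes over $\gplus$, as for polynomials), so that $\kvector\termMult\marking=\bigoplus_{1\leq i\leq n}\kvector\termMult\marking_i$, and it suffices to show that every implementation of a \zero satisfies $E$. So let $\marking_i$ implement $\nu_i$ and write $\omega_i:=\val[\sigma]{\varrho(\nu_i)}$ for the assignment $\sigma$ of \dref{def:implementation}. Condition~2 forces the support of $\kvector(p)\termMult\marking_i(p)$ to be the single term $\omega_i$ for every $p$ with $\kcoeff_p\not=\gzero$ (places with $\kcoeff_p=\gzero$ contribute $\gzero$), so $\omega_i$ is the only term in $\support{\kvector\termMult\marking_i}$. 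Its coefficient there is $\bigoplus_{p\in P}\nu_i(p)\kcoeff_p$ by condition~1, and this equals $\gzero$ because $\nu_i$ is a \zero. Hence $\kvector\termMult\marking_i=\gzero$, and summing gives (1).

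For $(1)\Rightarrow(2)$ the idea is to group the tokens of $\marking$ by the ground term they produce under $\kvector$. For a token $\theta\in\support{\marking(p)}$ call $\kterm_p\termMult\theta$ its \emph{image}; since each token has a unique image and $\support{\marking}$ is finite, the finitely many distinct images $\omega_1,\dots,\omega_n$ partition all tokens. For each $\omega_j$ I would define the sub-marking $\marking_j$ by retaining exactly the tokens whose image is $\omega_j$, so that $\marking=\sum_{1\leq j\leq n}\marking_j$, and set $\nu_j(p):=\sum_{\theta\in\support{\marking_j(p)}}\marking_j(p)(\theta)$. Reading off the coefficient of $\omega_j$ in the identity $\kvector\termMult\marking=\gzero$ then yields $\bigoplus_{p\in P}\nu_j(p)\kcoeff_p=\gzero$, which is the first \zero condition.

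It remains to check that each $\nu_j$ is a \zero and that $\marking_j$ implements it, and this is where the main work lies. For solvability of the unification problem $U$ of \dref{def:zero}, I would exploit that the single ground term $\omega_j$ is a common instance of all relevant inscriptions: for $p$ with $\kcoeff_p\not=\gzero$ and $\nu_j(p)\not=0$ there is a token $\theta$ with $\kterm_p\termMult\theta=\omega_j$, i.e.\ the substitution sending every variable of $\kterm_p$ to $\theta$ maps $\kterm_p$ to $\omega_j$. Assuming the inscriptions $\kterm_p$ use pairwise disjoint variables (as in \fref{fig:net-equations}), these per-place substitutions combine into one substitution that simultaneously maps every $\kterm_p$ to $\omega_j$ and hence unifies all pairs of $U$; so $U$ is solvable and $\nu_j$ is a \zero. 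The same common instance shows $\omega_j=\val[\sigma]{\varrho(\nu_j)}$ for an assignment $\sigma$ factoring through $\underline{\nu_j}$, giving condition~2 of \dref{def:implementation}, while condition~1 holds by the definition of $\nu_j$; assembling the $\marking_j$ then gives (2). The main obstacle is precisely this unification step: one must argue that a shared ground instance of several inscriptions yields a unifier of the whole system, which hinges on the disjointness of the variables occurring in the different $\kterm_p$ (without it an occurs-check clash can leave $U$ unsolvable even when a common image exists), so I would make that variable convention explicit before running the argument.
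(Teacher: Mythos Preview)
Your approach coincides with the paper's: the paper also decomposes $\marking$ according to the finitely many ``images'' $\omega=\kterm_p\termMult\theta$ (with an extra summand $\marking_0$ collecting tokens on places with $\kcoeff_p=\gzero$), defines $\nu_\omega$ and $\marking_\omega$ exactly as you do, and handles $(2)\Rightarrow(1)$ by additivity. Your flagging of the variable-disjointness issue is apt and actually sharper than the paper, which simply asserts that ``all terms may unify to $\omega$'' without isolating the assumption; your counterexample sketch ($\kterm_p=X$, $\kterm_{p'}=\dot f(X)$) shows that without a disjoint-variables convention the unification problem in \dref{def:zero} can fail even when a common ground image exists, so making that convention explicit is the right move.
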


\noindent Our next goal is to abstractly describe sets of firing modes \emph{derivable} from a set of \zeros. Formally, we describe such a set of derived firing modes by a substitution, abstractly describing a way of enabling a transition. 

\begin{definition}[Derivable]
\label{def:derivable}
Let $t\in T$. 
 Let $\spset$ be a set of \zeros.
 For every $q\in \pre{t}$ let $X_q \in \varset$ be a fresh variable,
 such that $X_q$ does not occur in $E$ or $t$ and $X_q = X_{q'}$ implies $q=q'$.
 Let $\nu_q\in S$ be a \zero with $\nu_q(q) \geq 1$.
 Let $U = \left\{ \varrho(\nu_q)\termMult X_q \doteq 
 \kterm_q\termMult\theta_{q,t} \mid q\in \pre{t} \right\}$,
 where $\{\theta_{q,t}\} = \support{\tminus(q)}$.
 Let $U$ be solvable by most general unification $\delta$. 
 Then, $\delta$ is \emph{derivable} from $\spset$.
\end{definition}

\begin{figure}[t]
\begin{tabularx}{\textwidth}{|l|l|l|l|X|X|}
\hline
 & \bsptext{W} & \bsptext{Y} & \bsptext{Z} & Derivable from some $E_j$? & $\kvector_j\termMult(\subst[\delta_i]{\tdelta})$ \\
 \hline
 $\delta_1$ & $X_{\bsptext{C}}$ & $X_{\bsptext{B}}$ & $\dot{g}(X_{\bsptext{B}})$ & yes, for $j= 1$ & $-\dot{f}(\dot{g}(X_{\bsptext{C}})) + \dot{g}(X_{\bsptext{B}})$ \hfill ($j= 1$)\\
 $\delta_2$ & $X_{\bsptext{C}}$ & $X_{\bsptext{B}}$ & $\dot{c}$ & yes, for $j = 2$ & $0$ \hfill ($j= 2$)\\
 \hline
\multicolumn{6}{l}{}\\
 \hline
 & \bsptext{W} & \bsptext{Y} & \bsptext{Z} & Realization of & $\kvector_1\termMult\val[\sigma_1]{\tdelta}$ \\
 \hline
 $\sigma_1$ & $\dot{c}$ & $\dot{c}$ & $\dot{g}(\dot{c})$ & $\delta_1$ & $-\dot{f}(\dot{g}(\dot{c})) + \dot{g}(\dot{c})$ \\
 \hline
\end{tabularx}
 \caption{Derivable substitutions $\delta_1$ and $\delta_2$, and a realization $\sigma_1$ of $\delta_1$}
 \label{fig:derivable-realizations}
\end{figure}

\noindent In the example of \fref{fig:derivable-realizations},
we can derive $\delta_1$ for $E_1$ with
$\nu_{\bsptext{A}} = \nu_{\bsptext{C}} = \nu_4$ and $\nu_{\bsptext{B}} = \nu_{\bsptext{D}} = \nu_1$.
For $E_2$, we can derive $\delta_2$ with
$\nu_{\bsptext{A}} = \nu_{\bsptext{B}} = \nu_{\bsptext{C}} = \nu_{\bsptext{D}} = \nu_5$.

A \emph{realization} is an assignment which refines a derivable substitution: 

\begin{definition}[Realization]
\label{def:realization}
 Let $\spset$ be a set of \zeros and $\delta$ be derivable from $\spset$.
 Then, $\sigma$ is a \emph{realization} of $\delta$, if there exists an assignment $\sigma'$ with
 $\sigma(X) = \val[\sigma']{\delta(X)}$ for all $X\in\varset$.
\end{definition}

\noindent
The assignment $\sigma_1$ shown in \fref{fig:derivable-realizations} is a realization of $\delta_1$.
The assignment $\sigma$ with $\sigma(X_{\bsptext{C}})=\sigma(X_{\bsptext{B}})=\dot{c}$
gives $\sigma_1(\bsptext{A}) = \val{X_{\bsptext{A}}} = \dot{c}$, $\sigma_1(\bsptext{B}) = \val{X_{\bsptext{B}}} = \dot{c}$ and
$\sigma_1(\bsptext{C}) = \val{\dot{g}(X_{\bsptext{B}})} = \dot{g}(\dot{c})$.

Next, we show that the derived substitutions from the set of all \zeros exactly characterize the set of $E$-satisfying, $t$-enabling markings: If an $E$-satisfying marking $\marking$ enables $t$ in firing mode $\sigma$, then $\sigma$ is a realization of some derivable substitution, and vice versa: 

\begin{lemma}
\label{lem:realization-enabling}
 Let $\spset$ be the set of all \zeros and $\sigma$ be an assignment.
 Then, the following two statements are equivalent:
 \begin{enumerate}
  \item There exists a marking $\marking$ with:
  $\marking\geq\val{\tminus}$ and $\kvector\termMult\marking=\gzero$.
  \item There exists a $\delta$ that is derivable from $\spset$ 
    and $\sigma$ is a realization of $\delta$.
 \end{enumerate}
\end{lemma}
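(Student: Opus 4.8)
The plan is to prove both implications of the equivalence, using \lref{lem:markings-solutions} as the bridge between $E$-satisfying markings and finite sums of implementations of \zeros, and \dref{def:derivable} together with \dref{def:realization} as the bridge between firing modes and the unification problem $U$.

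For the direction from (1) to (2), suppose a marking $\marking$ satisfies $\kvector\termMult\marking=\gzero$ and enables $t$ in mode $\sigma$, that is, $\marking\geq\val[\sigma]{\tminus}$. By \lref{lem:markings-solutions} I decompose $\marking=\sum_i\marking_i$, where each $\marking_i$ implements a \zero. For every $q\in\pre{t}$ the consumed token $\val[\sigma]{\theta_{q,t}}$ occurs on place $q$ in $\marking$, hence in some summand $\marking_i$; I let $\nu_q$ be the \zero that this summand implements, so that $\nu_q(q)\geq 1$ and $\nu_q$ is an admissible choice for \dref{def:derivable}. The defining assignment of the implementation yields $\kterm_q\termMult\val[\sigma]{\theta_{q,t}}=\varrho(\nu_q)\termMult\eta_q$ for a suitable ground term $\eta_q$, the content that $\val[\sigma]{\theta_{q,t}}$ contributes to the common kind $\varrho(\nu_q)$. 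Setting $X_q\mapsto\eta_q$ and keeping $\sigma$ on the transition variables then gives a common instance of both sides of every equation in $U$, so $U$ is solvable; let $\delta$ be its most general unifier. Because this extension of $\sigma$ factors through $\delta$, there is an assignment $\sigma'$ with $\sigma(X)=\val[\sigma']{\subst[\delta]{X}}$ for all $X$, i.e.\ $\sigma$ is a realization of $\delta$.

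For the converse, let $\delta$ be derivable from the set $\spset$ of all \zeros via \zeros $\nu_q$ ($q\in\pre{t}$), and let $\sigma$ be a realization of $\delta$ with witness $\sigma'$. Applying $\sigma'$ to the unified equations of $U$ and using $\sigma(X)=\val[\sigma']{\subst[\delta]{X}}$, I obtain for each $q$ the identity $\kterm_q\termMult\val[\sigma]{\theta_{q,t}}=\varrho(\nu_q)\termMult\sigma(X_q)$; that is, the token to be consumed from $q$ has exactly the kind prescribed by $\nu_q$. I then build, for each $q$, an implementation $\marking_q$ of $\nu_q$ that carries $\val[\sigma]{\theta_{q,t}}$ on place $q$ with the multiplicity required by $\tminus(q)$, choosing on the remaining places of $\nu_q$ tokens of the same common kind. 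Taking $\marking:=\sum_q\marking_q$, padded so that $\marking\geq\val[\sigma]{\tminus}$ on all of $\pre{t}$, yields a marking that enables $t$ in mode $\sigma$; and since $\marking$ is a sum of implementations of \zeros, \lref{lem:markings-solutions} gives $\kvector\termMult\marking=\gzero$.

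The main obstacle is the two-way translation between the purely syntactic unification machinery---the most general unifier $\underline{\nu_q}$ defining $\varrho(\nu_q)$ and the unifier $\delta$ of $U$---and the semantic firing mode $\sigma$. The delicate point is that $\varrho(\nu_q)\termMult X_q$ substitutes a \emph{single} fresh variable for all variables of $\varrho(\nu_q)$, whereas \dref{def:implementation} a priori only forces a common kind $\val[\sigma]{\varrho(\nu_q)}$ to be attained on each place. I expect the crux to be showing that this is no loss: in any implementation the defining assignment is forced to be constant on the variables of $\varrho(\nu_q)$, since a single consumed token fills every variable position of $\kterm_q$ uniformly and consistency across all places of the \zero propagates this uniformity through $\underline{\nu_q}$. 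Establishing this in both directions---extracting $\eta_q$ from an implementation, and extending a token of the right kind back to a full implementation---together with the bookkeeping for places $q\in\pre{t}$ with $\kcoeff_q=\gzero$, which constrain enabling but not $E$, is where the real work lies.
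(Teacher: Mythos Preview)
Your proposal follows the same route as the paper: decompose via \lref{lem:markings-solutions}, pick for each $q\in\pre{t}$ a \zero $\nu_q$ with $\nu_q(q)\geq 1$ whose implementation carries the consumed token on $q$, and translate between the unification problem $U$ of \dref{def:derivable} and the firing mode $\sigma$ in both directions. You are in fact more explicit than the paper about the delicate point that $\varrho(\nu_q)\termMult X_q$ collapses all free variables of $\varrho(\nu_q)$ to a single one, and about the bookkeeping for places $q\in\pre{t}$ with $\kcoeff_q=\gzero$; the paper's own proof passes over both issues in a sentence.
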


\noindent A derivable substitution $\delta$ generally has infinitely many realizations.
We show that the choice of the realization does not matter for deciding stability.

\begin{lemma}
\label{lem:delta-zero-all-zero}
Let $\spset$ be a set of \zeros and $\delta$ be derivable from $\spset$.
Then, the following two statements are equivalent:
\begin{enumerate}
 \item $\kvector\termMult(\subst[\delta]{\tdelta}) = 0$
 \item $\kvector\termMult\val{\tdelta} = 0$ for all $\sigma$ that are realizations of $\delta$.
\end{enumerate}
\end{lemma}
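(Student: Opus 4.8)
The plan is to collapse both statements onto a single polynomial $Q := \kvector\termMult\subst[\delta]{\tdelta}$ and then reduce the equivalence to a \emph{freeness} property of the Herbrand structure. First I would rewrite statement~1 using the observation stated right after the $P$-vector definitions in \sref{sec:formalization}, namely $\kvector_1\termMult\subst[\delta]{\kvector_2} = \subst[\delta]{(\kvector_1\termMult\kvector_2)}$, applied with $\kvector_1 = \kvector$ and $\kvector_2 = \tdelta$. This gives $Q = \subst[\delta]{(\kvector\termMult\tdelta)} \in \polys{\group}{\terms}$, so statement~1 simply reads $Q = \gzero$.

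Next I would establish a \emph{composition identity} for realizations. If $\sigma$ is a realization of $\delta$ with witnessing assignment $\sigma'$, i.e.\ $\sigma(X) = \val[\sigma']{\delta(X)}$ for all $X\in\varset$, then a routine induction on term structure shows $\val[\sigma]{\theta} = \val[\sigma']{\subst[\delta]{\theta}}$ for every $\theta\in\terms$ (the base case is precisely the defining equation of a realization), and this lifts pointwise to polynomials and to $P$-vectors. Combining it with the observation again yields $\kvector\termMult\val[\sigma]{\tdelta} = \val[\sigma]{(\kvector\termMult\tdelta)} = \val[\sigma']{\subst[\delta]{(\kvector\termMult\tdelta)}} = \val[\sigma']{Q}$. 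Since, as $\sigma'$ ranges over all assignments, $\sigma$ ranges exactly over the realizations of $\delta$ (each of which is again an assignment), statement~2 is equivalent to ``$\val[\sigma']{Q} = \gzero$ for all assignments $\sigma'$''.

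With this reformulation the equivalence becomes a statement purely about the polynomial $Q\in\polys{\group}{\terms}$, namely $Q = \gzero$ iff $\val[\sigma']{Q} = \gzero$ for every assignment $\sigma'$. The direction ``$\Rightarrow$'' is immediate, as applying an assignment to the empty polynomial returns the empty polynomial. For ``$\Leftarrow$'' I would argue by contraposition: assuming $Q\neq\gzero$, I must exhibit a single assignment $\sigma'$ with $\val[\sigma']{Q}\neq\gzero$; the corresponding realization $\sigma$, defined by $\sigma(X) = \val[\sigma']{\delta(X)}$, then witnesses the failure of statement~2.

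The heart of the argument---and the step I expect to be the main obstacle---is exactly this last claim, a freeness property of the Herbrand structure: a nonzero polynomial over terms cannot vanish under every ground assignment. Since $\support{Q}$ is finite, it suffices to find an assignment $\sigma'$ that is injective on $\support{Q}$; then no two monomials can be merged, every coefficient survives, and $\val[\sigma']{Q}(\val[\sigma']{\omega}) = Q(\omega)\neq\gzero$ for any $\omega\in\support{Q}$. Constructing such a separating assignment is where the real work lies: one maps the finitely many variables occurring in $\support{Q}$ to ``generic'' ground terms---for instance, pairwise distinct ground terms of sufficiently large and mutually distinct depth---so that syntactically distinct terms remain distinct after instantiation. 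This is where the richness of the signature is used (enough ground terms, i.e.\ an infinite Herbrand universe), and I would isolate it as a separate lemma, proving injectivity by induction over term structure while distinguishing whether two terms already differ in their function-symbol skeleton or only at variable positions.
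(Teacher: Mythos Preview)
Your proposal is correct and follows essentially the same approach as the paper: both directions hinge on the composition identity $\val[\sigma]{\tdelta} = \val[\sigma']{\subst[\delta]{\tdelta}}$ for a realization $\sigma$ with witness $\sigma'$, and the nontrivial direction is handled by choosing an assignment that is injective on the finite support of the relevant polynomial (the paper simply asserts such $\sigma'$ exists ``as $\terms$ is infinite'', while you spell out the depth-based construction). Your reduction to the single polynomial $Q = \subst[\delta]{(\kvector\termMult\tdelta)}$ is a clean repackaging of the same argument.
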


\noindent Our proof of ``2.$\Rightarrow$1.''\,utilizes the existence of a realization $\sigma$ preserving the distinctness of terms in  $\kvector\termMult\tdelta$, that is, if two terms $\theta_1,\theta_2$ occur in $\kvector\termMult\tdelta$ with $\subst[\delta]{\theta_1}\ne\subst[\delta]{\theta_2}$, then $\val{\theta_1}\not=\val{\theta_2}$. 


Now, we prove that $t$-stability can be characterized by the set of all derivable substitutions:

\begin{lemma}
\label{lem:homo-stable-charac}
Let $\spset$ be the set of all \zeros.
 The following are equivalent:
 \begin{enumerate}
  \item $E$ is $t$-stable. 
  \item For all $\delta$ derivable from $\spset$ holds:
  $\kvector\termMult(\subst[\delta]{\tdelta}) = 0$.
 \end{enumerate}
\end{lemma}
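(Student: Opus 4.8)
The plan is to prove the equivalence by chaining together the characterization lemmas that precede this statement. The goal is to replace the quantification over all steps $\step{\marking}{t}{\sigma}{\marking^\prime}$ from $E$-satisfying markings (as in \lref{lem:semantic-invariant}) with a quantification over the finitely describable set of derivable substitutions. I would start from \lref{lem:semantic-invariant}, which tells us that $E$ is $t$-stable if and only if for every step $\step{\marking}{t}{\sigma}{\marking^\prime}$ with $\kvector\termMult\marking = \gzero$ we have $\kvector\termMult\val{\tdelta} = \gzero$. The task is thus to show that the set of pairs $(\marking,\sigma)$ consisting of an $E$-satisfying marking $\marking$ enabling $t$ in firing mode $\sigma$ is exactly captured, up to the value $\kvector\termMult\val{\tdelta}$, by the realizations of derivable substitutions.

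The forward direction (1.$\Rightarrow$2.) proceeds by contraposition or directly: assume $E$ is $t$-stable, and let $\delta$ be derivable from the set $\spset$ of all \zeros. By \lref{lem:realization-enabling} applied with any realization $\sigma$ of $\delta$, there exists a marking $\marking$ with $\marking \geq \val{\tminus}$ and $\kvector\termMult\marking = \gzero$; that is, $\marking$ enables $t$ in firing mode $\sigma$ and satisfies $E$. Since $E$ is $t$-stable, \lref{lem:semantic-invariant} yields $\kvector\termMult\val{\tdelta} = \gzero$ for this realization $\sigma$. Then \lref{lem:delta-zero-all-zero} lifts this single realization to the symbolic statement $\kvector\termMult(\subst[\delta]{\tdelta}) = 0$, which is precisely condition 2.

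For the converse (2.$\Rightarrow$1.), suppose $\kvector\termMult(\subst[\delta]{\tdelta}) = 0$ for every derivable $\delta$. To establish $t$-stability via \lref{lem:semantic-invariant}, take any step $\step{\marking}{t}{\sigma}{\marking^\prime}$ with $\kvector\termMult\marking = \gzero$. The enabling condition $\marking \geq \val{\tminus}$ together with $E$-satisfaction gives exactly statement~1 of \lref{lem:realization-enabling} (taking $\spset$ to be the set of all \zeros), so there is a derivable $\delta$ of which $\sigma$ is a realization. By hypothesis $\kvector\termMult(\subst[\delta]{\tdelta}) = 0$, and \lref{lem:delta-zero-all-zero} then forces $\kvector\termMult\val{\tdelta} = \gzero$ for every realization of $\delta$, in particular for our $\sigma$. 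Applying \lref{lem:semantic-invariant} concludes that $E$ is $t$-stable.

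The main obstacle is purely bookkeeping rather than conceptual: the heavy lifting has already been delegated to \lref{lem:realization-enabling} (which matches enabling satisfying markings with realizations of derivable substitutions) and to \lref{lem:delta-zero-all-zero} (which decouples the symbolic condition from the choice of realization). The one point demanding care is ensuring that the set $\spset$ of \zeros used when invoking \lref{lem:realization-enabling} is indeed the set of \emph{all} \zeros, since that lemma's equivalence is stated under this hypothesis; using a strict subset would break the ``vice versa'' direction. Provided the instantiations are consistent, the proof is essentially the composition of these two lemmas with \lref{lem:semantic-invariant}.
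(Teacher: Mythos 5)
Your proof is correct and follows essentially the same route as the paper's: both directions compose \lref{lem:realization-enabling} with \lref{lem:delta-zero-all-zero}, with \lref{lem:semantic-invariant} supplying the step-based reformulation of $t$-stability. If anything, your version is slightly cleaner --- the paper's ``2.$\Rightarrow$1.'' direction cites \lref{lem:markings-solutions} where it plainly means \lref{lem:realization-enabling}, and you invoke the correct lemma explicitly.
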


\noindent In the example shown in \fref{fig:net-equations}, $E_1$ is not stable.
Consider the marking $\marking_5:=\marking_1+\marking_2+\marking_3$. There, $\bsptext{t}$ is enabled.
But, for the firing mode $\sigma_1$, we have $\kvector_1\termMult\sigma_1 \not=0$.
On the other hand, $E_2$ is stable, although we have $\kvector_2\termMult\tdelta \not = 0$.

The following lemma proves a closure property for the derived substitutions: If one combines \zeros from a set $\spset$ to a new \zero $\nu$, then for every realizable substitution derivable from $\spset\cup\{\nu\}$, there exists a realizable substitution derivable from $\spset$.

\begin{lemma}
\label{lem:finite-derivable}
 Let $\spset$ be a set of \zeros and $\nu\not\in\spset$
 with $\nu = \sum_{i=1}^n\nu_i$ where $\nu_i\in\spset$. 
 Let $\delta$ be derivable from $\spset \cup \{\nu\}$
 and $\sigma$ be assignments that realizes $\delta$.
 Then, there exists $\delta'$ such that:
 $\delta'$ is derivable from $\spset$ and $\sigma$ realizes $\delta'$.
\end{lemma}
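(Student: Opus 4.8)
The plan is to argue through the step that $\sigma$ enables rather than through the unification problems directly, turning the algebraic claim into a statement about rebuilding a satisfying, enabling marking out of \zeros drawn only from $\spset$. Since $\sigma$ realizes $\delta$, and $\delta$ is derivable from $\spset\cup\{\nu\}$ via a choice of \zeros $\nu_q$ (one per $q\in\pre{t}$), I would first reconstruct a marking witnessing this realization: for each $q\in\pre{t}$ I build an implementation $\marking_q$ of $\nu_q$ (in the sense of \dref{def:implementation}) whose multiset on $q$ contains the token $\val[\sigma]{\theta_{q,t}}$ that $t$ actually consumes, and set $\marking:=\sum_{q\in\pre{t}}\marking_q$. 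This is precisely the content of the direction ``2.$\Rightarrow$1.''\,of \lref{lem:realization-enabling}, whose proof only uses that each $\nu_q$ is a \zero and does not need $\spset$ to be the set of \emph{all} \zeros; it yields $\marking\geq\val{\tminus}$ and $\kvector\termMult\marking=\gzero$.

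The core step is to rebuild $\marking$ from implementations of \zeros taken solely from $\spset$, while preserving the tokens consumed by $t$ in mode $\sigma$. For each $q$ with $\nu_q=\nu$ I would replace the implementation $\marking_q$ of $\nu$ by a sum of implementations of the summands $\nu_1,\dots,\nu_n\in\spset$. The enabling observation is a comparison of the unification problems of \dref{def:zero}: since $\nu=\sum_i\nu_i$ with every $\nu_i$ taking values in $\nats$, we have $\nu(p)\neq 0$ whenever $\nu_i(p)\neq 0$, so the unification problem of $\nu_i$ is contained in that of $\nu$; hence $\underline{\nu}$ refines each $\underline{\nu_i}$, and $\varrho(\nu)$ is an instance of every $\varrho(\nu_i)$. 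Consequently the ground ``kind'' $w_q:=\val[\sigma]{\kterm_q\termMult\theta_{q,t}}$ realized on $q$ by $\marking_q$---which equals $\varrho(\nu)\termMult\val[\sigma]{X_q}$ by the constraint that $\sigma$ realizes---is simultaneously a \emph{ground} instance of each $\varrho(\nu_i)$. Thus every summand $\nu_i$ admits an implementation using exactly the kind $w_q$, and the summands with $\nu_i(q)\geq 1$ can be chosen to place the consumed token $\val[\sigma]{\theta_{q,t}}$ back onto $q$. Replacing each such $\marking_q$ by this sum (and keeping $\marking_q$ for $q$ with $\nu_q\in\spset$) yields a marking $\marking'$ that still dominates $\val{\tminus}$, and that satisfies $\kvector\termMult\marking'=\gzero$ by additivity, since every implementation of a \zero satisfies $E$ and $\marking'$ is a sum of such implementations of \zeros lying entirely in $\spset$.

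Finally I would read a derivable substitution off $\marking'$. As $\marking'$ enables $t$ in firing mode $\sigma$ and decomposes into implementations of $\spset$-\zeros, for each $q\in\pre{t}$ there is a \zero $\nu_q'\in\spset$ with $\nu_q'(q)\geq 1$ whose implementation supplies $\val[\sigma]{\theta_{q,t}}$; the associated unification problem $\{\varrho(\nu_q')\termMult X_q\doteq\kterm_q\termMult\theta_{q,t}\mid q\in\pre{t}\}$ is then solvable, and its most general unifier $\delta'$ is derivable from $\spset$ and realized by $\sigma$. This mirrors the direction ``1.$\Rightarrow$2.''\,of \lref{lem:realization-enabling}, applied to the decomposition we engineered, and produces the asserted $\delta'$.

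I expect the decisive difficulty to lie in the rebuilding of the second paragraph and, in particular, in reconciling a summand $\nu_i$ whose unification result $\varrho(\nu_i)$ is \emph{strictly} more general than $\varrho(\nu)$: the specialization that fixes the kind on $q$ may be forced by a summand that does not cover $q$ at all, so no single summand literally reproduces the constraint that $\nu$ imposes on the auxiliary variable $X_q$, and accordingly $\delta'$ may differ from $\delta$ on $X_q$. The resolution is that only the physically consumed token $\val[\sigma]{\theta_{q,t}}$ must survive the replacement, and it does because $w_q$ is a ground instance of $\varrho(\nu_i)$; checking that the recombined $\marking'$ still provides every consumed token in the required multiplicity, and that $\sigma$ indeed realizes the resulting $\delta'$, is the technical heart of the argument.
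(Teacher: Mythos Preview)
Your proposal is correct, and it isolates exactly the same key observation the paper uses: because each summand satisfies $\nu_i\leq\nu$, the unification problem of $\nu_i$ is a subset of that of $\nu$, so $\varrho(\nu)$ is an instance of every $\varrho(\nu_i)$. Where you diverge is in the surrounding machinery. The paper stays entirely at the symbolic level: for each $q\in\pre{t}$ with $\nu_q=\nu$ it simply picks a summand $\nu_i$ with $\nu_i(q)\geq 1$ (which exists since $\nu(q)\geq 1$), notes that there is a substitution $\pi$ with $\subst[\pi]{\varrho(\nu_i)}=\varrho(\nu)$, and concludes that replacing $\nu$ by this $\nu_i$ in the derivation data only relaxes the constraint $\varrho(\nu_q)\termMult X_q\doteq\kterm_q\termMult\theta_{q,t}$; hence the new unification problem has a most general unifier $\delta'$ that $\delta$ refines, and therefore $\sigma$ realizes $\delta'$. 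Your route instead descends to concrete markings via (the two directions of) \lref{lem:realization-enabling}: you materialize an $E$-satisfying enabling marking from the $\nu_q$, split the pieces implementing $\nu$ into pieces implementing the $\nu_i$, and then read $\delta'$ back off the engineered decomposition. This is sound, but it is a genuine detour; in particular, the ``decisive difficulty'' you anticipate---that $\varrho(\nu_i)$ may be strictly more general than $\varrho(\nu)$---evaporates in the paper's direct argument, since a more general left-hand side makes the unification equation at $q$ easier, not harder, and $\sigma$ automatically realizes the weaker $\delta'$. Your approach buys a more explicit link to the semantic picture of \lref{lem:realization-enabling}; the paper's buys brevity and avoids any need to reassemble multisets or track multiplicities.
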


\noindent We observe that we can only derive finite sets of substitutions from finite sets of \zeros. 

\begin{lemma}
\label{lem:finitely-many-derivable}
Let $\spset$ be a finite set of \zeros.
The set $\{ \delta\ffrom\varset\fto\Theta \,\mid\, \delta$ is derivable from $S \}$ is finite and computable.
\end{lemma}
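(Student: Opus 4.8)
The plan is to show both finiteness and computability of the set of derivable substitutions $\mathcal{D} := \{\delta : \varset \to \Theta \mid \delta \text{ is derivable from } \spset\}$ by unwinding \dref{def:derivable}. Recall that a derivable $\delta$ arises from a \emph{choice} of data: for each place $q \in \pre{t}$, we pick a \zero $\nu_q \in \spset$ with $\nu_q(q) \geq 1$, which determines $\varrho(\nu_q)$ and hence the unification problem $U = \{\varrho(\nu_q) \termMult X_q \doteq \kterm_q \termMult \theta_{q,t} \mid q \in \pre{t}\}$; then $\delta$ is the most general unifier of $U$ (if $U$ is solvable). The key observation is that $\mathcal{D}$ is a function of finitely many discrete choices.

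First I would bound the number of choices. Since $\pre{t} \subseteq P$ is finite and $\spset$ is finite by hypothesis, the number of assignments $q \mapsto \nu_q$ of \zeros to the places in $\pre{t}$ (subject to $\nu_q(q) \geq 1$) is at most $|\spset|^{|\pre{t}|}$, which is finite. Each such assignment yields exactly one unification problem $U$: the terms $\varrho(\nu_q)$, $\kterm_q$, and $\theta_{q,t}$ are all fixed once $\nu_q$ is fixed (note $\kterm_q$ and $\theta_{q,t}$ come from the fixed $E$ and $t$), and the fresh variables $X_q$ are determined by $q$. Hence the map from choice-tuples to unification problems, and thence to most general unifiers, is a well-defined total function on a finite domain. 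This immediately gives finiteness of $\mathcal{D}$, since every derivable $\delta$ is the mgu of one of these finitely many problems.

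For computability, I would invoke the classical algorithm for first-order unification (guaranteed terminating and returning a most general unifier whenever one exists, by the lemma on most general unifiers stated earlier in the excerpt). The decision procedure is then: enumerate all finitely many choice-tuples $(\nu_q)_{q \in \pre{t}}$; for each, construct $U$ explicitly from the stored data $\varrho(\nu_q)$, $\kterm_q$, $\theta_{q,t}$; run the unification algorithm to decide solvability and, if solvable, output the most general unifier $\delta$; collect all such $\delta$ into $\mathcal{D}$. Since each step is effective and there are finitely many iterations, $\mathcal{D}$ is computable.

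The one point requiring a little care is that \dref{def:derivable} specifies $\delta$ only \emph{up to} the most general unifier, which is determined only up to renaming of variables (variants). I expect the main obstacle to be arguing that this ambiguity does not inflate $\mathcal{D}$ to an infinite set: one must either fix a canonical representative of each mgu (e.g. via a deterministic unification algorithm that produces a unique output) or quotient by the variant equivalence and check that only finitely many variant-classes arise. Either way, since each of the finitely many unification problems has an mgu unique up to variants, choosing one canonical representative per problem keeps $\mathcal{D}$ finite, and the deterministic unification algorithm supplies exactly such a canonical choice. This settles both claims.
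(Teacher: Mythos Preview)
Your proof is correct and follows essentially the same approach as the paper: both bound the number of derivable substitutions by $|\spset|^{|\pre{t}|}$ (one choice of zero per pre-place of $t$) and observe that enumerating these finitely many candidates yields computability. Your version is actually more careful than the paper's, since you explicitly address the mgu-up-to-variants issue and spell out the effective procedure via standard first-order unification, whereas the paper simply states the bound and says ``enumerate.''
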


\noindent Our next goal is to combine \lref{lem:finite-derivable} and \lref{lem:finitely-many-derivable}. To this end, we first define the notion of a \emph{spanning set} of zeros: A set capable of generating all zeros by means of addition. 

\begin{definition}[Spanning Set]
Let $S$ be a set of \zeros of $E$, such that for each \zero $\nu$ of $E$, there exist $\nu_1,\ldots,\nu_n\in S$, with $\nu(p) = \sum_{i = 1}^n \nu_i(p)$ for all $p\in P$. Then, $S$ is a \emph{spanning set} (of zeros) of $E$.
\end{definition}

\noindent Now, we show that given a \emph{finite} spanning set of \zeros, we can decide $t$-stability. 

\begin{lemma}
\label{lem:given-finite-spanning-set-decidable}
Given a finite spanning set $\spset$ of \zeros, $t$-stability of $E$ is decidable.
\end{lemma}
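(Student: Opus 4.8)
The plan is to reduce $t$-stability, via the characterization in \lref{lem:homo-stable-charac}, to a finite check that we can carry out once we have a finite spanning set at hand. Recall that \lref{lem:homo-stable-charac} states that $E$ is $t$-stable if and only if $\kvector\termMult(\subst[\delta]{\tdelta}) = 0$ for \emph{all} $\delta$ derivable from $\spset_{\text{all}}$, where $\spset_{\text{all}}$ is the set of \emph{all} \zeros. The obstacle is that $\spset_{\text{all}}$ is typically infinite, so naively this quantifies over infinitely many derivable substitutions. The key idea is that a \emph{finite} spanning set $\spset$ already captures everything relevant, so that it suffices to check the condition only for the finitely many $\delta$ derivable from $\spset$.

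Concretely, first I would observe that, since $\spset\subseteq\spset_{\text{all}}$, every $\delta$ derivable from $\spset$ is also derivable from $\spset_{\text{all}}$. Hence the direction ``$t$-stable $\Rightarrow$ condition holds on all $\delta$ derivable from $\spset$'' is immediate from \lref{lem:homo-stable-charac}. For the converse, I would take an arbitrary $\delta$ derivable from $\spset_{\text{all}}$ together with some realization $\sigma$ of $\delta$ (every derivable substitution has at least one realization), and show that the offending equality $\kvector\termMult(\subst[\delta]{\tdelta}) = 0$ can be transferred from a substitution derivable from $\spset$. To do this, I would use that $\spset$ is spanning: each \zero $\nu_q$ used in deriving $\delta$ decomposes as a finite sum $\nu_q = \sum_i \nu_{q,i}$ with $\nu_{q,i}\in\spset$. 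Repeatedly applying \lref{lem:finite-derivable}, which exactly handles the replacement of a \zero that is a sum of members of a set $\spset$ by members of $\spset$ while preserving realizability by the \emph{same} $\sigma$, I obtain a substitution $\delta'$ derivable from $\spset$ that is also realized by $\sigma$.

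Now the reduction closes as follows. By \lref{lem:delta-zero-all-zero}, applied to $\delta'$ and its realization $\sigma$, the assumed condition $\kvector\termMult(\subst[\delta']{\tdelta}) = 0$ (which holds because $\delta'$ is derivable from $\spset$) is equivalent to $\kvector\termMult\val{\tdelta} = 0$ for that realization $\sigma$. Applying \lref{lem:delta-zero-all-zero} again, this time to the original $\delta$ and the \emph{same} realization $\sigma$, yields $\kvector\termMult(\subst[\delta]{\tdelta}) = 0$. Since $\delta$ was an arbitrary substitution derivable from $\spset_{\text{all}}$, \lref{lem:homo-stable-charac} then gives $t$-stability. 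The main technical obstacle is the iterated application of \lref{lem:finite-derivable}: one has to be careful that replacing the \zeros used at the different places $q\in\pre{t}$ one at a time preserves derivability and that the \emph{single} realization $\sigma$ continues to witness realizability throughout, so that \lref{lem:delta-zero-all-zero} can be invoked for both $\delta'$ and $\delta$ with the same $\sigma$.

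Finally, the decision procedure itself is now routine: by \lref{lem:finitely-many-derivable} the set of substitutions derivable from the \emph{finite} set $\spset$ is finite and computable, and for each such $\delta$ the value $\kvector\termMult(\subst[\delta]{\tdelta})$ is an effectively computable element of $\group$ whose equality with $\gzero$ we can test. Enumerating these finitely many $\delta$ and checking the equation for each one decides, by the equivalence just established, whether $E$ is $t$-stable.
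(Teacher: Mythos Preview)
Your approach is essentially the paper's: reduce $t$-stability via \lref{lem:homo-stable-charac} to a condition on derivable substitutions, use \lref{lem:finite-derivable} to pass from the set of all \zeros to the finite spanning set $\spset$, and then invoke \lref{lem:finitely-many-derivable} for the finite, computable check. You are in fact more explicit than the paper's terse proof about how a common realization $\sigma$ bridges $\delta$ and $\delta'$ through \lref{lem:delta-zero-all-zero}.

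There is, however, a quantifier slip in your last transfer step. You fix ``some realization $\sigma$'' of $\delta$, deduce $\kvector\termMult\val{\tdelta}=\gzero$ for that single $\sigma$, and then appeal to \lref{lem:delta-zero-all-zero} to conclude $\kvector\termMult(\subst[\delta]{\tdelta})=\gzero$. But the direction $2.\Rightarrow 1.$ of \lref{lem:delta-zero-all-zero} needs $\kvector\termMult\val{\tdelta}=\gzero$ for \emph{all} realizations $\sigma$ of $\delta$, not just one. The repair is immediate and costs nothing: take $\sigma$ to be an \emph{arbitrary} realization of $\delta$ from the start. Your construction of $\delta'$ (via iterated \lref{lem:finite-derivable}) and the application of $1.\Rightarrow 2.$ of \lref{lem:delta-zero-all-zero} to $\delta'$ go through verbatim for that $\sigma$, yielding $\kvector\termMult\val{\tdelta}=\gzero$; since $\sigma$ was arbitrary, the reverse implication of \lref{lem:delta-zero-all-zero} now legitimately gives $\kvector\termMult(\subst[\delta]{\tdelta})=\gzero$. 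With this change the argument is complete.
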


\begin{proof}
 By \lref{lem:finite-derivable}, for every $\delta$ that is derivable from the set of
 \zeros, there exists a $\delta^\prime$ derivable from $\spset$. By \lref{lem:finitely-many-derivable}, the set of all these $\delta^\prime$ is finite and computable.
 By \lref{lem:homo-stable-charac}, $E$ is stable if and only if for every
 $\delta'$ we have $\kvector\termMult\subst[\delta']{\tdelta} = 0$, which is computable.
\end{proof}

\noindent The last step in our proof of \tref{thm:decidable} is showing that a finite spanning set of zeros can be computed if $\group$ is cyclic. For infinite cyclic groups, we apply that there exists a computable isomorphism into the integers. As a prerequisite, we observe that every spanning set contains every \emph{indecomposable \zero},~i.e.,~a \zero which cannot be written as a sum of other \zeros. For example, consider the \zeros $\nu_1$, $\nu_2$ and $\nu_3$ from \fref{fig:zeros-solutions}: $\nu_1$ and $\nu_2$ are indecomposable, but $\nu_3 = \nu_1+\nu_1$ is not. 
Thus, we show that there exists an upper bound for the coefficients of indecomposable \zeros. To this end, we first show an auxiliary lemma, based on the maximum coefficient $\overline{\gamma}$, and the absolute value $\underline{\gamma}$ of the minimal coefficient in $\gamma$. In the example equation $E_1$ from \fref{fig:net-equations}, we have $\overline{\gamma} = 4$ and $\underline{\gamma} = 5$. Intuitively, if the maximum constituent in a zero $\nu$ over places with negative (resp. positive) coefficients is less than $\overline{\gamma}$ (resp. $\underline{\gamma}$), then the sum of the constituents in $\nu$ is bounded by $2|P|\overline{\gamma}\underline{\gamma}$. For $E_1$, the upper bound is $2 \cdot 5\cdot 4\cdot 5 = 200$.  
\begin{lemma}
\label{lem:zero-bound}
 Let $\nu\in \nats^P$. Let $\eta\in\ints^P$ be mixed with $\sum_{p\in P}\nu(p)\cdot\eta(p)=0$.
 Let $\overline{\eta}:=\max\{ \eta(p) \mid p\in P\}$
 and $\underline{\eta}:=\max\{ |\eta(p)| \mid \eta(p)<0, p\in P\}$ with:
 \begin{enumerate}
  \item $\max\{ \nu(p) \mid \eta(p) < 0, p\in P \} < \overline{\eta}$
  \item or $\max\{ \nu(p) \mid \eta(p) > 0, p\in P \} < \underline{\eta}$.
 \end{enumerate}
Then, $\sum_{p\in P}\nu(p) < 2|P|\overline{\eta}\underline{\eta}$
\end{lemma}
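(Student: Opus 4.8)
The plan is to reduce the claim to a single inequality that bounds the ``total mass'' $\sum_{p\in P}\nu(p)$ by twice a common value coming from the balance constraint. First I would partition $P$ according to the sign of $\eta$, writing $P^{+} = \{p\in P\mid \eta(p)>0\}$ and $P^{-}=\{p\in P\mid \eta(p)<0\}$; under the hypothesis that $\eta$ is mixed (so that every coordinate is nonzero and both signs occur) we have $P = P^{+}\cup P^{-}$ with both parts nonempty. Rewriting $\sum_{p\in P}\nu(p)\eta(p)=0$ by moving the negative summands to the other side gives the balance identity
$$ S \;:=\; \sum_{p\in P^{+}}\nu(p)\,\eta(p) \;=\; \sum_{p\in P^{-}}\nu(p)\,\absval{\eta(p)}. $$

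The central observation is that the total mass is controlled by $S$: since $\absval{\eta(p)}\geq 1$ for every $p$, we have $\nu(p)\leq \nu(p)\absval{\eta(p)}$ coordinatewise, and summing over the two blocks yields $\sum_{p\in P}\nu(p) \leq \sum_{p\in P^{+}}\nu(p)\eta(p) + \sum_{p\in P^{-}}\nu(p)\absval{\eta(p)} = 2S$. It therefore suffices to show $S < |P|\,\overline{\eta}\,\underline{\eta}$.

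This last bound is where the two hypotheses enter, and the two cases are symmetric. In case~1, I would estimate $S$ through its negative-block expression: every $p\in P^{-}$ satisfies $\nu(p) < \overline{\eta}$ by assumption and $\absval{\eta(p)}\leq \underline{\eta}$ by definition of $\underline{\eta}$, so each summand $\nu(p)\absval{\eta(p)}$ is strictly below $\overline{\eta}\,\underline{\eta}$; summing over the nonempty set $P^{-}$ gives $S < |P^{-}|\,\overline{\eta}\,\underline{\eta}\leq |P|\,\overline{\eta}\,\underline{\eta}$. In case~2, I would instead estimate $S$ through its positive-block expression, using $\nu(p)<\underline{\eta}$ on $P^{+}$ and $\eta(p)\leq\overline{\eta}$, obtaining $S < |P^{+}|\,\underline{\eta}\,\overline{\eta}\leq |P|\,\overline{\eta}\,\underline{\eta}$. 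Combining either estimate with $\sum_{p\in P}\nu(p)\leq 2S$ yields the claimed strict inequality $\sum_{p\in P}\nu(p) < 2|P|\,\overline{\eta}\,\underline{\eta}$.

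I do not expect a deep obstacle here: the argument is elementary arithmetic once the balance identity is isolated. The only points requiring care are (i) maintaining strictness, which is automatic because the hypothesis $\nu(p)<\overline{\eta}$ (resp.\ $\nu(p)<\underline{\eta}$) is strict, the relevant block $P^{-}$ (resp.\ $P^{+}$) is nonempty, and $\overline{\eta},\underline{\eta}\geq 1$; and (ii) the reading of ``mixed'', since the conclusion genuinely needs $\eta$ to have no zero coordinate: a place with $\eta(p)=0$ would be unconstrained by the balance equation yet still contribute to $\sum_{p\in P}\nu(p)$. I would make this assumption explicit so that the partition $P=P^{+}\cup P^{-}$ is exhaustive.
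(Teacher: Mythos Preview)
Your proof is correct and follows essentially the same route as the paper: partition $P$ by the sign of $\eta$, use $|\eta(p)|\geq 1$ to bound $\sum_p\nu(p)\leq 2S$ via the balance identity $S=\sum_{P^+}\nu(p)\eta(p)=\sum_{P^-}\nu(p)|\eta(p)|$, and then bound $S$ blockwise using the relevant hypothesis. Your treatment is in fact more careful than the paper's own writeup---you track strictness explicitly and flag the need for $\eta$ to have no zero coordinate, whereas the paper's proof states its balance identity~$(*)$ somewhat sloppily and writes an equality where an inequality is meant.
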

\noindent Finally, we show the computability of a finite spanning set of \zeros. To this end, we utilize \lref{lem:zero-bound} to show that the sum of constituents of each indecomposable \zero is bounded by $2|P|\overline{\gamma}\underline{\gamma}$: We assume a zero $\nu$ with $\sum_{p\in P}\nu(p)\geq 2|P|\overline{\gamma}\underline{\gamma}$, and show that 
$\nu$ decomposes into two zeros $\hat{\nu}$ and $\nu - \hat{\nu}$.
Thus, extracting all \zeros from the finite set of all $\nu\in\nats^P$ with $\sum_{p\in P}\nu(p) < 2|P|\overline{\gamma}\underline{\gamma}$ yields a set of \zeros containing all indecomposable \zeros, and hence a finite spanning set. 

\begin{lemma}
\label{lem:finite-spanning-set-computable}
If $\group$ is cyclic, a finite spanning set $\spset$ of \zeros is computable.
\end{lemma}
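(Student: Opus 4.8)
The plan is to compute the finite set of all \zeros whose coefficient sum lies below a computable bound, and to argue that it is a spanning set. I first record two facts about \emph{indecomposable} \zeros, i.e.\ \zeros that cannot be written as a sum of two nonzero \zeros. On the one hand, every \zero is a sum of indecomposable \zeros: this follows by induction on $\sum_{p\in P}\nu(p)$, since in any nontrivial decomposition $\nu=\hat\nu+(\nu-\hat\nu)$ both summands have strictly smaller coefficient sum. Consequently, any set of \zeros that contains all indecomposable \zeros is a spanning set. On the other hand, I claim that the coefficient sum of every indecomposable \zero is bounded by $B:=2|P|\,\overline{\kcoeff}\,\underline{\kcoeff}$, where $\overline{\kcoeff}$ and $\underline{\kcoeff}$ are read off the integer images of the coefficients $\kcoeff_p$. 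Proving this bound is the crux; everything else is enumeration and decidable checking.

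For the bound I use that $\group$ is cyclic. In the infinite case the computable isomorphism $\group\cong\ints$ turns each $\kcoeff_p$ into an integer $\eta(p)$, and the \zero condition $\sum_{p\in P}\nu(p)\kcoeff_p=0$ becomes the genuine integer equation $\sum_{p\in P}\nu(p)\eta(p)=0$. Restricting attention to the places with $\kcoeff_p\neq\gzero$ (a place with $\kcoeff_p=\gzero$ can only contribute to an indecomposable \zero as an isolated unit vector, which is trivially bounded), if $\eta$ has entries of only one sign then no nonzero \zero is supported there and the claim is vacuous; otherwise $\eta$ is mixed and I argue by contraposition through \lref{lem:zero-bound}. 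Assume $\sum_{p\in P}\nu(p)\geq B$. Then hypotheses (1) and (2) of \lref{lem:zero-bound} both fail, so there exist $p^-$ with $\eta(p^-)<0$, $\nu(p^-)\geq\overline{\kcoeff}$ and $p^+$ with $\eta(p^+)>0$, $\nu(p^+)\geq\underline{\kcoeff}$. I let $\hat\nu$ be supported on $\{p^+,p^-\}$ by the least positive integer solution of $\hat\nu(p^+)\eta(p^+)+\hat\nu(p^-)\eta(p^-)=0$; then $\hat\nu(p^+)\leq|\eta(p^-)|\leq\underline{\kcoeff}\leq\nu(p^+)$ and $\hat\nu(p^-)\leq\eta(p^+)\leq\overline{\kcoeff}\leq\nu(p^-)$, so $0\leq\hat\nu\leq\nu$ pointwise.

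It remains to verify that $\hat\nu$ and $\nu-\hat\nu$ are both nonzero \zeros, which exhibits $\nu$ as decomposable. The condition $\sum\hat\nu(p)\kcoeff_p=0$ holds because $\sum\hat\nu(p)\eta(p)=0$, and then $\sum(\nu-\hat\nu)(p)\kcoeff_p=0$ follows by linearity; solvability of the respective unification problems is inherited from that of $\nu$, since each is built from a subset of the places entering the unification problem of $\nu$ (cf.\ \dref{def:zero}). Finally $\hat\nu\neq\gzero$ as $\hat\nu(p^+)\geq 1$, and $\nu-\hat\nu\neq\gzero$ because $\sum_{p}\hat\nu(p)\leq\overline{\kcoeff}+\underline{\kcoeff}<B\leq\sum_{p}\nu(p)$ (a mixed $\eta$ forces $|P|\geq 2$). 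This proves the contrapositive and hence the bound. The finite case $\group\cong\ints/m\ints$ is analogous but simpler: since $m\kcoeff_p=\gzero$, any indecomposable \zero must satisfy $\nu(p)<m$ whenever $\kcoeff_p\neq\gzero$ (otherwise split off the \zero supported on $p$ with value $\mathrm{ord}(\kcoeff_p)$), yielding an analogous, computable bound.

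Equipped with the bound, the set $\{\nu\in\nats^P\mid\sum_{p\in P}\nu(p)<B\}$ is finite and effectively enumerable; for each candidate $\nu$ I decide whether it is a \zero by computing $\sum_{p\in P}\nu(p)\kcoeff_p$ via the integer images and testing whether it equals $\gzero$, and by deciding solvability of the unification problem $U$ from \dref{def:zero} (unification is decidable). Collecting the successful $\nu$ produces a finite, computable set $\spset$ of \zeros that contains every indecomposable \zero, hence a finite spanning set. I expect the decomposition step to be the main obstacle: one must choose the sub-\zero $\hat\nu$ so that it balances the integer equation \emph{and} leaves both $\hat\nu$ and $\nu-\hat\nu$ with solvable unification problems, which is exactly where \lref{lem:zero-bound} and the inheritance of unifiability are combined.
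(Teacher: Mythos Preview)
Your proposal is correct and follows essentially the same approach as the paper: split into the infinite and finite cyclic cases, in the infinite case use the contrapositive of \lref{lem:zero-bound} to locate a two-place sub-\zero $\hat\nu$ whose unification problem is inherited from $\nu$, and in the finite case peel off single-place \zeros of order dividing $|\group|$. The only cosmetic differences are that you take the \emph{least} positive two-place solution for $\hat\nu$ (the paper takes the specific solution $\hat\nu(\overline{p})=|\gamma_{\underline{p}}|$, $\hat\nu(\underline{p})=\gamma_{\overline{p}}$) and that you frame the argument explicitly around indecomposable \zeros, which the paper mentions in the surrounding text but does not use as the organizing device of the proof itself.
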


\begin{proof}
Assume $\kvector$ is semi-positive or semi-negative, then $0$ is the only \zero.
In the following, we assume $\kvector$ to have mixed coefficients. We distinguish the cases
whether $\group$ is finite or infinite.
\begin{itemize}
 \item First case: $\group$ is infinite.
 As $\group$ is cyclic, there exists a computable isomorphism to $\ints$
 (see for instance \cite{Whitelaw1978}).
Thus, we assume w.l.o.g that $\group=\ints$.
Let $\overline{\gamma}:=\max\{ \gamma(p) \mid p\in P\}$
and $\underline{\gamma}:=\max\{ |\gamma(p)| \mid \gamma(p)<0, p\in P\}$.
Let $\nu$ be a \zero with $\sum_{p\in P}\nu(p)>2|P|\underline{\gamma}\overline{\gamma}$ (*).
We show that that then, there exist $\underline{p},\overline{p}\in P$ with: 
$\gamma_{\overline{p}}>0 \wedge
\gamma_{\underline{p}}<0 \wedge
\nu(\overline{p})\geq|\gamma_{\underline{p}}| \wedge
\nu(\underline{p})\geq\gamma_{\overline{p}}$.
Assume the opposite: Then,
  $\max\{ \nu(p) \mid \gamma_p < 0, p\in P \} < \overline{\gamma}$
  or $\max\{ \nu(p) \mid \gamma_p > 0, p\in P \} < \underline{\gamma}$.
By \lref{lem:zero-bound}, then  $\sum_{p\in P}\nu(p) < 2|P|\overline{\gamma}\underline{\gamma}$,
which contradicts (*).

 Now, let $\hat{\nu}\ffrom P\fto\nats$ with:
 \[\hat{\nu}(p) =
 \begin{cases}
  |\gamma_{\underline{p}}| & \text{ if } p = \overline{p} \\
  \gamma_{\overline{p}} & \text{ if } p = \underline{p} \\
  0 & \text{ otherwise }
 \end{cases}\]
 By definition, we have $\hat{\nu} \leq \nu$, moreover
 as $\sum_{p\in P}\hat{\nu}(p) \leq \underline{\gamma}\overline{\gamma} 
<\sum_{p\in P}\nu(p)$, we have $\hat{\nu} < \nu$.
 Let $\nu' = \nu - \hat{\nu}$. Then, $\nu'\ffrom P \fto \nats$ and $\nu' > 0$.
 
 Now we show that $\hat{\nu}$ and $\nu'$ are zeros.
 For $\hat{\nu}$ we have
 $\sum_{p\in P}\nu(p)=|\gamma_{\underline{p}}|\gamma_{\overline{p}}+\gamma_{\underline{p}}\gamma_{\overline{p}}=
 -\gamma_{\underline{p}}\gamma_{\overline{p}}+\gamma_{\underline{p}}\gamma_{\overline{p}}=0$
 and accordingly $0=\sum{p\in P}\nu(p)=\sum{p_\in P}\hat{\nu}(p) + \sum_{p\in P}\nu'(p)=
 0 +\sum_{p\in P}\nu'(p)$.
 It remains to show that the unification problems of $\hat{\nu}$ and $\nu'$
 are solvable.
 We observe $\hat{\nu}\leq\nu$ ($\nu'\leq\nu$) implies that unification problem 
 of $\hat{\nu}$ ($\nu'$) is a subset of the unification problem of $\nu$.
 Thus, $\nu$ is a sum of the zeros $\nu'$ and $\hat{\nu}$.
 
 Now, we see that $\sum_{p\in P}\hat{\nu}(p) < 2|P|\overline{\gamma}\underline{\gamma}$.
 Assume additionally $\sum_{p\in P}\nu'(p) \leq 2|P|\overline{\gamma}\underline{\gamma}$,
 then we can continue.
 Otherwise, if $\sum_{p\in P}\nu'(p) > 2|P|\overline{\gamma}\underline{\gamma}$,
 we can apply induction, as $\nu' < \nu$,
 Hence, $\nu$ is the sum of other \zeros $\nu_1,\dots,\nu_n$, where for each $1\leq i \leq n$:
 $\sum_{p\in P}\nu_i(p) \leq 2|P|\underline{\gamma}\overline{\gamma}$.
 Finally, $\{\nu \in \nats^p \mid \sum_{p\in 
P}|\nu(p)|\leq|2P|\overline{\gamma}\underline{\gamma} \text{ and $\nu$ is 
zero}\}$ is finite, spanning and computable.
 \item Second Case: Let $\group$ be finite with order 
$o\in\nats\setminus\{0\}$. As $\group$ is cyclic, there exists the generator $e\in\group$.
Let $g\in\group$.
Then, it holds that 
$g + oe = g$. Thus, for every $\nu\ffrom P\fto \nats$, and $p\in P$ with $\nu(p)> o$, we 
have $\nu(p)\gamma_p = (\nu(p)-o)\gamma_p$. Hence, for every \zero $\nu$ we can 
find a \zero $\nu'$ with $\nu'(p)\leq o$ and $\sum_{p\in 
P}\gamma_p\nu(p)=\sum{p\in P}\gamma_p\nu'(p)$. Therefore, $\{\nu \in \nats^p 
\mid \nu(p)\leq o \text{ and $\nu$ is \zero}\}$ is finite, spanning and 
computable.\qed
\end{itemize}\fixqed
\end{proof}
%

\section{Related Work}
\label{sec:related}
APNs or similar ``high level net''-formalisms are an established, expressive modeling language for distributed systems\cite{Reisig2013,Reisig1997}. 
Moreover, tools for Colored Petri Nets support simulation and (partial) verification \cite{Jensen2009,Jensen2015}.
The idea to prove stable properties in Petri nets that use distinguishable tokens
has been pursued at least since the early 80s~\cite{Genrich1981}.
Ever since, the class of invariants became a substantial part of Petri Net 
analysis \cite{Memmi1986,Reisig1997,Reisig2013}.
Other stable properties for Algebraic Petri Nets have been studied in the context of
Traps/Co-Traps \cite{Karsten1997}. In elementary Petri Nets (P/T-Nets), stable properties such as traps and co-traps
have been studied \cite{Reisig2013} and been shown as useful for verification~\cite{Reisig2013,Esparza2014}.
Compared to this, the number of publications regarding stable properties in \apns is comparatively small.
In the last years, Petri Net variants with distinguishable tokens gained more 
attention to model data in distributed systems and applying analytic methods such as \cite{Esparza16,Hofman2016,RosaVelardo2011}.

The concept of stability has been used in other areas of research; the most similar maybe being
abstract interpretation as a technique for verification of iterative programs~\cite{Bouajjani2010}. 
In the context of data-aware business processes, stability has been used
in a similar context, following a graph-oriented approach focusing
on data modeling \cite{nutt2016}.

\section{Concluding Remarks}
\label{sec:conclusion}
Throughout this paper, we applied three restrictions: First, we only considered the interpretation of terms in the Herbrand structure, second, we only considered homogeneous $P$-equations, and third, we required for the decidability proof that the group of coefficients is cyclic. 

If one chooses another structure for the interpretation of terms than the Herbrand structure, one can observe that validity and stability are preserved in one direction: If a $P$-equation is valid (stable) w.r.t. the Herbrand structure, then it is valid (stable) w.r.t. every generated structure. Because the Herbrand structure is a specific structure, the undecidability result (\tref{thm:undecidable}) could be generalized by allowing an arbitrary, but not fixed, structure. For the decidability result (\tref{thm:decidable}), we observe that we can use our decision procedure as a sufficient but not necessary criterion for an arbitrary fixed structure.

The restriction to homogeneous $P$-equations yields that satisfying markings are closed under addition, which is not the case if one allowed arbitrary constants on the right hand side. Here, our approach of finding a finite spanning set symbolically describing all satisfying markings does not work. The main challenge for generalizing our approach is that markings have natural numbers as coefficients (in contrast to integers).

For our decidability result, we require that the coefficients stem from a cyclic group. Here, we explicitly exploit in the proofs that there exist a distinct generator element, and an isomorphism to the integers, or the integers modulo some natural number $n$. 



\bibliography{paper}

\appendix
\section{Missing proofs}
\label{sec:proofs}

\renewcommand{\lref}[1]{Lemma~\ref{#1} (page~\pageref{#1})}
\newcommand{\ptit}[1]{Proof of~\lref{lem:#1}}

\begin{proof}[\ptit{inductive-invariants}]
Let $\marking$ be a reachable marking.
Then, there exists a sequence of steps
$\runldots{\marking_0}{t_1}{\sigma_1}{\marking_1}{\marking_{n-1}}{t_n}{\sigma_n}{\marking_n}$
with $\marking_n = \marking$.
If $n=0$, then $\marking=\marking_0$ and thus $\kvector\termMult\marking=\gzero$.
Otherwise, by induction we have $\kvector\termMult\marking_{n-1}=\gzero$
and as $E$ is $t_n$-stable, satisfaction is preserved along $\step{\marking_{n-1}}{t_n}{\sigma_n}{\marking_n}$
and thus $\kvector\termMult\marking_n=\gzero$ and hence $\kvector\termMult\marking=\gzero$.
\end{proof}

\begin{proof}[\ptit{invariants}]
Let $\step{\marking}{t}{\sigma}{\marking'}$ be a step with $\kvector\termMult\marking=\gzero$.
Then, $\kvector\termMult\val{\tdelta}=\val{\kvector\termMult\tdelta}=\val{\gzero}=\gzero$.
We conclude that $\kvector\termMult\marking' = \kvector\termMult(\marking + \val{\tdelta}) =
\kvector\termMult\marking + \kvector\termMult\val{\tdelta} = \gzero + \gzero = \gzero$.
Hence, $\marking'$ satisfies $E$ and $E$ is $t$-stable in $(P,T)$.
\end{proof}

\begin{proof}[\ptit{minsky-step}]
By \dref{def:marking-minsky}, $\marking^\rho_\ell(p_r)$ is the monomial $\ml{\theta_{\rho(r)})}{1}$, and $\marking(q_k)(\dot{c}) = 0$ for all $k\not=\ell$. We distinguish based on the type of $I_i$. 
\begin{itemize}
 \item Let $I_i = INC(r,z)$. We observe $\tminus_i(p_r)$ is the monomial $\ml{X}{1}$, and for all $1\leq s\leq \mathcal{R}$, $s\not=r$: $\tminus_i(p_s) = 0$. Furthermore, $\tminus_i(q_i)$ is the monomial $\ml{\dot{c}}{1}$ and $\tminus_i(q_j) = 0$ for all $1\leq j\leq n$, $j\not= i$. Hence, $t_i$ is enabled in firing mode $\sigma$ with $\sigma(X) = \theta_{\rho(r)}$: $\marking^\ell_\rho(p_r) = \theta_{\rho(r)}$ and $\marking^\ell_\rho(p_i)$ is the monomial $\ml{\dot{c}}{1}$. 
Because $\tplus_i(p_r)$ is the monomial $\ml{\dot{f}(X)}{1}$, we conclude $\marking^\prime(p_r)$ is the monomial $\ml{\dot{f}(\theta_{\rho(r)})}{1} = \ml{\theta_{\rho(r)+1}}{1}$. Because $\tdelta_i(q_i)$ is the monomial $\ml{\dot{c}}{-1}$, $\marking^\prime(q_i) = 0$. Because $\tdelta_i(q_z)$ is the monomial $\ml{\dot{c}}{1}$, $\marking^\prime(q_z)$ is the monomial $\ml{\dot{c}}{1}$. For all other $1\leq j\leq n$, $i\not= j\not=z$, $\marking(q_j) = \tdelta_i(q_j) = \marking^\prime(q_j) = 0$. Therefore, $\marking^\prime = \marking^{\rho^\prime}_{\ell^\prime}$.    
 \item Let $I_i = JZ(r,z_1,z_2)$. We distinguish between $\rho(r) = 0$ and $\rho(r) > 0$. 
 \begin{itemize}
  \item Let $\rho(r) = 0$. Then, $\marking^\ell_\rho(p_r)$ is the monomial $\ml{\theta_0}{1} = \ml{\dot{c}}{1}$. 
  Because $\tminusp_i(p_r)$ and $\tminusp_i(q_i)$ each are the monomial $\ml{\dot{c}}{1}$, $t_i^\prime$ is enabled in any firing mode $\sigma$. 
  Because $\tdeltap_i(p_r) = 0$, we conclude $\marking^\prime(p_r)$ is the monomial $\ml{\theta_0}{1} = \ml{\theta_{\rho^\prime(r)}}{1}$. 
  Because $\tdeltap_i(q_i)$ is the monomial $\ml{\dot{c}}{-1}$, $\marking^\prime(q_i) = 0$. 
  Because $\tdeltap_i(q_{z_2})$ is the monomial $\ml{\dot{c}}{1}$, $\marking^\prime(q_z)$ is the monomial $\ml{\dot{c}}{1}$. 
  For all other $1\leq j\leq n$, $i\not= j\not=z_2$, $\marking(q_j) = \tdeltap_i(q_j) = \marking^\prime(q_j) = 0$. Therefore, $\marking^\prime = \marking^{\rho^\prime}_{\ell^\prime}$.    
  \item Let $\rho(r) > 0$. Then, the proof is symmetrical to the case $I_i = INC(r,z_1)$. 
 \end{itemize}
\end{itemize}
\end{proof}

\begin{proof}[\ptit{step-minsky}]
 Inductively applying \lref{lem:minsky-step}, we conclude that for every state $(\rho,\ell)$ of $M$: $(0,1)\rightarrow^* (\rho,\ell)$ iff $\marking^\ell_\rho$ is reachable in $(N_M,\marking^1_0)$. Therefore, there exists $\rho$ with $(0,1)\rightarrow^* (\rho,n)$ iff marking $\marking^n_\rho$ is reachable. Obviously, $q_n = 0$ is valid iff for all reachable markings $\marking^\ell_\rho$: $\ell\not= n$. Hence, $q_n = 0$ is valid iff $M$ does not halt. 
\end{proof}

\begin{proof}[\ptit{markings-solutions}]
1.$\Rightarrow$2.:
Let $\Omega = \{ \omega \in \Theta \,|\, \exists p\in P : \exists \varrho\in\support{\marking(p)}:
\omega = \kterm_p\termMult\varrho $ and $\gamma_p\not=\gzero \}$.
For every $\omega \in \Omega$ we identify a \zero $\nu_\omega$ as follows:
$\nu_\omega(p) = \sum_{\omega=\kterm_p\termMult\varrho}\marking(p)(\varrho)$.
Furthermore, let $\marking_\omega$ be a marking with:
\[\marking_\omega(p)(\theta) =
\begin{cases}
\marking(p)(\theta) & \text{ if } \kterm_p\termMult\theta = \omega \\
0 & \text{ otherwise }. \\
\end{cases}\]
Additionally, let $\marking_0$ be the marking with $\marking_0(p) =
\begin{cases}
\marking(p) & \text{ if }\gamma_p=\gzero \\
0 & \text{ otherwise } 
\end{cases}$ for every $p\in P$.

We observe that $\marking_\omega(p)(\theta) \not = 0$ implies
$\marking_{\omega'}(p)(\theta) = 0$ for every $p\in 
P,\theta\in\Theta,\omega,\omega'\in\Omega,\omega'\not=\omega$.
Hence,
$\marking = \sum_{\omega \in \Omega}\marking_\omega + \marking_0$.
$\marking_0$ implements any \zero, especially the trivial \zero.
Next, we show that $\marking_\omega$ implements a \zero $\nu_\omega$,
defined by $\nu_\omega(p) = \sum_{\theta\in\Theta}\marking(p)(\theta)$.
First, for every $\nu_\omega$ we see: $0=(\kvector\termMult\marking)(\omega) =
\sum_{p\in P}\sum_{\omega=\theta_p\termMult\varrho}a_p\marking(p)(\varrho)$.
By the definition of $\nu_\omega$, the unification problem is solvable, as all terms
may unify to $\omega$.
Thus, each $\nu_\omega$ is
a \zero and each $\marking_\omega$ implements $\nu_\omega$.

\noindent 2.$\Rightarrow$1.: 
$\kvector\termMult\marking =
\kvector\termMult \sum_{1\leq i\leq n} \marking_i =
\sum_{1\leq i\leq n} \kvector\termMult\marking_i$.
As $\marking_i$ implements a \zero $\nu_i$,
we have $(\kvector\termMult\marking_i)(\varrho(\nu_i)) = 
\sum_{p\in P}\nu(p)\kcoeff_p = 0$.
And moreover, we have $\kvector\termMult\marking_i = 0$ and thus
$\kvector\termMult\marking = 0$.
\end{proof}

\begin{proof}[\ptit{realization-enabling}]
For this proof, let $\theta_{q,t} \in \Theta$ with $\{ \theta_{q,t} \}=\support{{\tminus(q)}}$ for all $q\in\pre{t}$.

 \noindent
 1.$\Rightarrow$2.:
 By \lref{lem:markings-solutions} there exist markings $\marking_1,\dots,\marking_n$ and \zeros
 $\nu_1,\dots,\nu_n$
 with $\marking = \sum_{i=1}^n\marking_i$ and $\marking_i$ implements zero $\nu_i$ (for $i=1,\dots,n$).
 For every $q\in \pre{t}$,
 there exists a $\marking_i$ with $\marking_i(q)(\val{\theta_{q,t}})\geq 1$.
 Moreover, $\sigma$ solves the unification problem for respective $\marking_i$.
 Hence, $\sigma$ is a realization of some derivation $\delta$.
 
 \noindent 2.$\Rightarrow$1.:
 As $\delta$ is derivable from $\spset$, there exist \zeros $\nu_q$ for
 every $q\in\pre{t}$ with $\nu_q(q)\geq 1$ and there exists a marking $\marking_q$ that implements
 $\nu_q$ with $\marking_q(q)(\val{\theta_{q,t}})\geq 1$ for $\sigma$.
 Then, $\marking = \sum_{q_\in\pre(t)}(\tminus(\theta_{q,t}))\marking_q(q)$ enables $t$.
 And, as for every $\marking_q$,
 $\kvector\termMult\marking_q=0$, we have $\kvector\termMult\marking=0$.
\end{proof}

\begin{proof}[\ptit{delta-zero-all-zero}]
1.$\Rightarrow$2.:
As $\sigma$ realizes $\delta$, there exists a $\sigma'$ with 
$\sigma(X)=\val[\sigma']{\delta(X)}$.
Thus, $\val{\tdelta} = \val[\sigma']{\subst[\delta]{\tdelta}}$.
Moreover, we have $\kvector\termMult\val[\sigma']{\subst[\delta]{\tdelta}}=
\val[\sigma']{\kvector\termMult\subst[\delta]{\tdelta}} = \val[\sigma']{0} = 0$.

\noindent 2.$\Rightarrow$1.:
Let $\sigma'$ be an assignment with $\val[\sigma']{\theta}=\val[\sigma']{\theta'}$ implies $\theta=\theta'$ for 
all $\theta,\theta'\in\support{\subst[\delta]{\tdelta}}$.
Such an assignment exists as $\Theta$ is infinite.
Then $\val[\sigma']{\subst[\delta]{\tdelta}}(\val[\sigma']{\theta}) =
\subst[\delta]{\tdelta}(\theta)$ for all $\theta\in\Theta$.
 Let $\sigma(X) = \val[\sigma']{\delta(X)}$ for all $X\in\varset$
 and $\kvector\termMult\val{\tdelta}=0$.
 Let $\theta\in\Theta$.
 Then $\kvector\termMult(\subst[\delta]{\tdelta})(\theta)=
 \kvector\termMult\val[\sigma']{\subst[\delta]{\tdelta}}(\val{\theta})=0$.
\end{proof}

\begin{proof}[\ptit{homo-stable-charac}]
1.$\Rightarrow$2.:
For all $\sigma$ with $\sigma$ is a realization of $\delta$, we have
by \lref{lem:realization-enabling} that there exists a marking $\marking$ with
$\kvector\termMult\marking=0$ and $\marking\geq\val{\tminus}$.
By stability of $E$, we have that $\kvector\termMult\val{\tdelta}=0$.
By \lref{lem:delta-zero-all-zero} this implies $\kvector\termMult(\subst{\tdelta})=0$.

\noindent 2.$\Rightarrow$1.:
Let $\marking$ be a marking and $\sigma$ be an assignment with: $\kvector\termMult\marking=0$
and $\marking\geq\val{\tminus}$.
By \lref{lem:markings-solutions}, $\sigma$ is realization of a $\delta$, that
is derivable from $\spset$.
As $\kvector\termMult(\subst[\delta]{\tdelta})=0$, by \lref{lem:delta-zero-all-zero}, we have that
$\kvector\termMult\val{\tdelta}=0$ and thus $E$ is stable.
\end{proof}

\begin{proof}[\ptit{finite-derivable}]
 Let $\delta$ be derivable from $\{ \nu_q \in \spset \cup \{\nu\} \mid q\in P \}$.
 Let $Q = \{ q\in P \mid \nu_q = \nu \}$. Let $q\in Q$.
 As $\nu$ is derivable, we have $\nu(q)\geq 1$.
 Thus, there exists an $i\leq n$ with $\nu_i(q)\geq 1$.
 As $\nu_i \leq \nu$, the unification problem of $\nu_i$ is a subset of the unification problem
 of $\nu$.
 Thus, there exists a substitution $\pi$ with $\subst[\pi]{\varrho(\nu_i)}=\varrho(\nu)$ and thus
 there exists a $\delta'$ that is derivable from $\spset$ and
 $\sigma$ is realizable from $\delta'$.
\end{proof}

\begin{proof}[\ptit{finitely-many-derivable}]
For every pre-place of $t$, we choose one \zero. Thus, the number of 
derivable $\delta$ is limited by:
 $|\{ \delta \mid \delta \text{ is derivable from }S \}| \leq |S|^{|\pre{t}|} < \infty$.
  The set may be computed by enumerating all possible candidates to the bound. 
\end{proof}

\begin{proof}[\ptit{given-finite-spanning-set-decidable}]
 By \lref{lem:finite-derivable}, for every $\delta$ that is derivable from the set of
 \zeros, there exists a $\delta^\prime$ derivable from $\spset$. By \lref{lem:finitely-many-derivable}, the set of all these $\delta^\prime$ is finite. 
 The proof of \lref{lem:finitely-many-derivable} also shows that the set may be computed by enumerating all possible candidates to the bound. 
 By \lref{lem:homo-stable-charac}, $E$ is stable if and only if for every
 $\delta$ we have $\kvector\termMult\subst[\delta]{\tdelta} = 0$, which is computable.
\end{proof}

\begin{proof}[\ptit{zero-bound}]
We denote: $\overline{P} :=  \{ p\in P | \gamma_p > 0 \}$,
$\underline{P} :=  \{ p\in P | \gamma_p < 0 \}$,
$\underline{\nu} := \max\{ \nu(p) | p\in\underline{P}\}$ and
$\overline{\nu} := \max\{ \nu(p) | p\in\overline{P}\}$. 
As $\nu$ is a zero, we have
  $\sum_{p\in P}\nu(p) =  \sum_{p\in\underline{P}}\nu(p) + \sum_{p\in\overline{P}}\nu(p)$
  and hence $\sum_{p\in\underline{P}}\nu(p) = \sum_{p\in\overline{P}}\nu(p)$ (*).
We distinguish two cases whether 1. is true or 2. is true.
\begin{enumerate}
 \item 
  $\sum_{p\in P}\nu(p) =  \sum_{p\in\underline{P}}\nu(p) + \sum_{p\in\overline{P}}\nu(p)
  \leq  \sum_{p\in\underline{P}}|\gamma_p|\nu(p) + \sum_{p\in\overline{P}}\gamma_p\nu(p)
  \overset{(*)}{=}  2\sum_{p\in\underline{P}}|\gamma_p|\nu(p)
  = 2\sum_{p\in\underline{P}}\underline{\gamma}\underline{\nu}
  \overset{(1.)}{<}  2\sum_{p\in\underline{P}}\underline{\gamma}\overline{\gamma}
  =  2|\underline{P}|\underline{\gamma}\overline{\gamma}
  \leq  2|P|\underline{\gamma}\overline{\gamma}$
  \item 
  $\sum_{p\in P}\nu(p) =  \sum_{p\in\underline{P}}\nu(p) + \sum_{p\in\overline{P}}\nu(p)
  \leq  \sum_{p\in\underline{P}}|\gamma_p|\nu(p) + \sum_{p\in\overline{P}}\gamma_p\nu(p)
  \overset{(*)}{=}  2\sum_{p\in\overline{P}}\gamma_p\nu(p)
  = 2\sum_{p\in\overline{P}}\overline{\gamma}\overline{\nu}
  \overset{(2.)}{<}  2\sum_{p\in\overline{P}}\overline{\gamma}\underline{\gamma}
  =  2|\overline{P}|\underline{\gamma}\overline{\gamma}
  \leq  2|P|\underline{\gamma}\overline{\gamma}$
  
\end{enumerate}

\end{proof}


\end{document}